\newtheorem{thm}{Theorem}
\newtheorem{lem}{Lemma}
\newtheorem{prop}{Proposition}
\newtheorem{cor}{Corollary}
\newtheorem{defn}{Definition}
\newtheorem{examp}{Example}
\newtheorem{rmk}{Remark}
\begin{document}

\title{ Bott--Kitaev periodic table and index theory }
\author{ Dan Li  \thanks{ email: li1863@math.purdue.edu } \\  
	\\			
	\\			
        Keywords: topological insulators,   KR-theory, index theory 
	    }

\date{}
\maketitle

\begin{abstract}
 We consider topological insulators and superconductors with discrete symmetries 
 and  clarify the relevant index theory behind the periodic table proposed by Kitaev. 
 An effective Hamiltonian determines the analytical index, which can be computed by a topological index.
 We focus on the spatial dimensions one, two and three, and only consider the bulk theory. 
 In two dimensions, the $\mathbb{Z}$-valued invariants are given by 
 the first Chern number. Meanwhile, $\mathbb{Z}_2$-valued invariants can be computed by the odd topological index and its variations. 
 The Bott-Kitaev periodic table is well-known in the physics literature, we organize the topological invariants in the framework of KR-theory.

\end{abstract}

\section{Introduction}\label{Intro}

The concept of symmetry is the foundation for many fundamental theories of modern physics. For example,
 Wigner's theorem  places severe restrictions on  possible operators representing symmetry transformations in quantum mechanics.
More precisely, if $\mathcal{H}$ is a complex Hilbert space of physical states and $S:\mathcal{H} \rightarrow \mathcal{H} $ is a norm-preserving surjective symmetry transformation, then $S$ is a unitary or anti-unitary  operator.

In the study of topological phases of matter, one considers various discrete symmetries such as the time reversal symmetry (TRS), particle hole symmetry (PHS) and chiral symmetry (CS) 
(sometimes also called sub-lattice  symmetry (SLS)). Such $\mathbb{Z}_2 $ symmetries have significant effects on the band structure of   topological insulators and superconductors 
(sometimes also called  generalized topological insulators). 
More precisely, one considers the topological band theory of free fermionic ground states of gapped systems with discrete $\mathbb{Z}_2 $  symmetries, and models their band structures by topological vector bundles. As a classification
scheme, topological K-theory can be used to classify topological band structures.  Those $\mathbb{Z}_2$ symmetries introduce involutions (or anti-involutions) on vector bundles and the base manifold. So
the Real K-theory  (i.e., KR-theory \cite{A66})  provides a perfect model to study the topological band theory for topological insulators and superconductors.

Dyson's three-fold way \cite{D62} shows that the orthogonal, unitary and symplectic groups are respectively fundamental symmetries in real, complex and quaternionic quantum mechanics. The Wigner--Dyson symmetry classes
are later generalized to Altland--Zirnbauer symmetry classes \cite{AZ97, HHZ05, Z96}, which is  referred to as the ten-fold way. 
That is, the combinations of three $\mathbb{Z}_2$ symmetries  TRS, PHS and CS give ten symmetry classes in total.  
Since its intimate relation to Cartan's classification of symmetric spaces \cite{SCR11}, the tenfold way is sometimes also called the Cartan--Altland--Zirnbauer classification scheme.

Kitaev  observed that the table of possible phases of topological insulators and superconductors  resembles the Bott periodicity in K-theory \cite{K09}, 
and then  proposed the K-theoretic classification of  generalized topological insulators with $\mathbb{Z}_2$ symmetries. 
Indeed, the Altland--Zirnbauer symmetry classes
induce involutions or anti-involutions in the topological band theory, and such involutive vector bundles fall into the category of KR-theory. 
The relevant topological phase (or invariant) is expected to be an index number with K-theory as the receptacle. 
Given the effective Hamiltonian of a generalized topological insulator,  one first considers its analytical index, and then find a  topological index formula to compute the topological invariant. 
In other words, the topological invariant is understood as an index theorem in KR-theory.
For instance, the topological $\mathbb{Z}_2$ invariant of time reversal invariant topological insulator is a mod 2 index theorem in KR-theory \cite{KLW15},
which is the prototypical example of a $\mathbb{Z}_2$ invariant counting the parity of Majonara zero modes.

It is quasi-particles (rather than real particles) that are the fundamental objects manifesting rich topological phases in condensed matter physics.
For example, Majorana zero modes in superconductors consist of an electron and a hole intersecting at the zero energy,
and Majorana bound states of time reversal invariant topological insulators give rise to Dirac cones.  
It is believed that Majorana bound states or Majorana zero modes (both will be called Majonara zero modes due to similar properties)  will have promising applications in designing quantum computers \cite{DFN15}.
We point out that all $\mathbb{Z}_2$ invariants are index numbers in $KO^{-2}(pt) = \mathbb{Z}_2$ as a feature of quasi-particles.

We study generalized topological insulators and relevant topological invariants in the framework of index theory and K-theory.
With involutions or anti-involutions induced by $\mathbb{Z}_2$ symmetries, the topological band theory falls into the category of Real K-theory, and we 
emphasize that the bulk topological invariants are index numbers in KR-theory. The diagonal map in the periodic table is a consequence of the $(1,1)$-periodicity of KR-theory. 
So a topological invariant can be computed as a topological index in the bulk, which will be collected
in a table as the main result.
The boundary theory will not be discussed in this paper, nor the bulk-boundary correspondence.

This paper collects examples of index theorems in the literature, and organizes them in the unified framework of KR-theory.   
For example, the mod 2 index theorem of time reversal invariant topological insulators concerns the parity
of Majorana zero states  and computes the conductance as a topological index \cite{KLW15}.
Index theorems for Majorana zero modes in topological superconductors have been established, for example, in \cite{FF10, R10, TDL07}. 
In a modern language, an index number can be obtained as the index pairing between K-homology and K-theory, this approach has been discussed in \cite{GS15}.
For more details about the periodic table in physics, the reader can consult the review papers \cite{HK10, QZ11, SA17, SRFL09}.

This paper is organized as follows. Sect. \ref{BG} gives some background about the Bott--Kitaev periodic table, KR-theory and index theory. 
Sect. \ref{sec:TRS} mainly reviews the mod 2 index theorem in KR-theory for time reversal invariant topological insulators, i.e., type AII.
Sect. \ref{sec:PHS} gives plenty of examples of topological superconductors and the relevant topological invariants. 
Sect. \ref{sec:CS} gives examples in the chiral orthogonal and symplectic classes. 
Sect. \ref{main} contains the main observations from the above examples and provides a table of topological indices for practical computations.

\section{Background}\label{BG}

In this section, we  briefly review some basic facts about  $\mathbb{Z}_2$ symmetries in generalized topological insulators and the Bott--Kitaev periodic table, which is the main object we want to clarify using index theory \cite{AS69}.
The relevant topological band theory can be studied by the Real K-theory, i.e., KR-theory \cite{A66}, and the topological invariants will be understood as index numbers.

\subsection{Periodic table}

The $\mathbb{Z}$- and $\mathbb{Z}_2$-valued topological  invariants of topological insulators and Bogoliubov-de Gennes (BdG) superconductors
fit into a periodic table resembling the Bott periodicity as in topological K-theory. A Clifford algebra classification was
established in  \cite{SRFL09} and the K-theory classification followed as the Clifford extension problem \cite{K09}.
A finer classification of topological phases based on twisted equivariant K-theory was discussed in \cite{FM13}. 

Given an effective Hamiltonian $H$ parameterized by the momentum space $X$ (or $\mathbf{k}$-space), one considers three $\mathbb{Z}_2$ symmetries: the time reversal symmetry $\mathcal{T}$, particle hole symmetry $\mathcal{C}$ and chiral symmetry $\mathcal{S}$.
If $H$ is time reversal invariant, which means that $\mathcal{T} H \mathcal{T}^{-1} = H$,  and additionally $\mathcal{T}^2 =\pm 1$ depending on the spin being integer or half-integer,
 the time reversal symmetry (TRS) introduces three classes
\begin{equation*}
    TRS = \left\{
  \begin{array}{l l}
    +1 & \quad \text{if} \quad \mathcal{T} H(\mathbf{k}) \mathcal{T}^* = H(- \mathbf{k}), \,\, \mathcal{T}^2=+1\\
    -1 & \quad \text{if} \quad \mathcal{T} H (\mathbf{k})\mathcal{T}^* = H(-\mathbf{k}), \,\, \mathcal{T}^2 =-1 \\
    0  & \quad \text{if} \quad \mathcal{T} H(\mathbf{k}) \mathcal{T}^* \neq H(-\mathbf{k})
  \end{array}  \right.
\end{equation*}
Similarly, the particle hole symmetry (PHS) also gives three classes
\begin{equation*}
    PHS = \left\{
  \begin{array}{l l}
    +1 & \quad \text{if} \quad \mathcal{C} H(\mathbf{k}) \mathcal{C}^* =- H(-\mathbf{k}), \,\, \mathcal{C}^2=+1\\
    -1 & \quad \text{if} \quad \mathcal{C} H(\mathbf{k}) \mathcal{C}^* = -H(-\mathbf{k}), \,\, \mathcal{C}^2 =-1 \\
    0  & \quad \text{if} \quad \mathcal{C} H(\mathbf{k}) \mathcal{C}^* \neq -H(-\mathbf{k})
  \end{array}  \right.
\end{equation*}
The chiral symmetry (CS) is defined as the product $\mathcal{S} = \mathcal{T} \cdot \mathcal{C}$, sometimes also referred to as the sublattice symmetry (SLS). 
If both $\mathcal{T}$ and $\mathcal{C}$ are nonzero, then the chiral symmetry is present, i.e., $\mathcal{S} = 1$.

The unitary class or type A is the symmetry class that has no TRS, PHS or CS/SLS.
The chiral unitary class or type AIII is the symmetry class that has the chiral symmetry but no TRS or PHS.
These two classes can be classified by complex K-theory, i.e., $K^{i}(X)$ ($i = 0, 1$), which will not be investigated in this paper. 
In fact, we only focus on those eight classes related to KR-theory. 
The following is the periodic table of topological insulators and superconductors in spatial dimensions one, two and three, 
we will look into the topological invariants and interpret them as index numbers.  

\begin{center}
 \begin{tabular}{||c| c| c| c | c | c | c | c |c |c ||}
 \hline
    Type & TRS & PHS & CS/SLS &  Class & d=1 & d=2 & d=3  \\ [0.5ex]
 \hline\hline
  AI & +1 & 0 & 0  & orthogonal & 0 &  0  & 0 \\
  \hline
 BDI & +1 & +1 & 1   & chiral orthogonal & $\mathbb{Z}$ & 0  & 0 \\
 \hline
 D & 0 & +1 & 0  &  BdG & $\mathbb{Z}_2$ &  $\mathbb{Z}$  & 0 \\
  \hline
 DIII & $-1$ & +1 & 1  & BdG  & $\mathbb{Z}_2$ & $\mathbb{Z}_2$  & $  \mathbb{Z}$  \\  
 \hline
  AII & $-1$ & 0 & 0  & symplectic & 0 & $\mathbb{Z}_2$ &  $\mathbb{Z}_2$   \\
  \hline
CII & $-1$ & $-1$ & 1   & chiral symplectic & $\mathbb{Z}$ & 0 & $\mathbb{Z}_2$  \\
 \hline
 C & 0 & $-1$ & 0  & BdG & 0 & $\mathbb{Z}$ &  0 \\
  \hline
 CI & +1 & $-1$ & 1   & BdG & 0  & 0  & $\mathbb{Z} $ \\ [1ex]
 \hline
\end{tabular}
\end{center}

\subsection{KR-theory}

Let $X$ be a compact space,  that is used to model the momentum space of the lattice model of  a topological insulator.
\begin{examp}
A lattice in $\mathbb{R}^d$ is a free abelian group isomorphic to $\mathbb{Z}^d$ and its Pontryagin dual is the torus $\mathbb{T}^d$. 
When one considers the translational symmetry on a lattice,
 one has the momentum space   $X = \mathbb{T}^d$.

The limit of a lattice model is the continuous model defined on $\mathbb{R}^d$ and its Pontryagin dual is itself,
in this case the momentum space is the one point compactification of $\mathbb{R}^d$, i.e.,  $X = (\mathbb{R}^d)^+ = \mathbb{S}^d$.
\end{examp}

\begin{defn}
An involutive space $(X, \tau)$ is a compact space $X$ equipped with an involution, i.e., a homeomorphism $\tau: X \rightarrow X$
such that $\tau^2 = id_X$.
\end{defn}
The pair $(X, \tau)$ is also called a Real space with the real structure $\tau$. The prototypical example of an involution
$\tau$ is given by the complex conjugation.

\begin{examp}
 Let $\mathbb{R}^{p, q}$ denote $\mathbb{R}^{p} \oplus i \mathbb{R}^{q} $,
the canonical involution $\tau$ on  $\mathbb{R}^{p, q}$ is defined
by the complex conjugation so that $\tau|_{\mathbb{R}^{p}} = +1$ and $\tau|_{i \mathbb{R}^{q}} = -1$.
By definition, $\mathbb{S}^{ p,q}$ denotes the unit sphere in $\mathbb{R}^{p,q}$, and in this notation the torus $\mathbb{T}^d = (\mathbb{S}^{ 1,1})^d$.
\end{examp}

The fixed points of an involution $\tau$ is the set,
\begin{equation*}
   X^\tau := \{ {x} \in X ~|~ \tau( {x} ) =  {x} \}
\end{equation*}
In other words, $X^\tau$ is the set of real points with respect to  $\tau$. We always assume that
$X^\tau$ is a finite set. If $X$ is a CW-complex, then the $\mathbb{Z}_2$ action induced by $\tau$ makes $X$ into a $\mathbb{Z}_2$-equivariant CW-complex.

The band structure of a topological insulator defines a topological vector bundle $\pi: E \rightarrow X$ 
over the base space $X$, i.e.,  the topological band theory. 
All possible topological vector bundles can be classified by K-theory, 
so topological K-theory can be used to classify topological band structures.

 A Hilbert bundle  is a vector bundle equipped with a Hermitian metric. 
 It is natural for a complex vector bundle to equip a Hermitian metric $g$, and up to isomorphism the choice of such a Hermitian metric is unique. 
 A Hilbert bundle (with a flat connection) is commonly  used to model continuous fields of Hilbert spaces in geometric quantization.
 Hence the space of sections of the Hilbert bundle $(E, g)$, denoted by $\Gamma(X, E)$,  models the Hilbert space of physical states of a topological insulator.

  A discrete symmetry defines a symmetry transformation on the momentum space, which can be lifted to the  vector bundle.
\begin{defn}
 A Real vector bundle $\pi: (E, \iota) \rightarrow (X, \tau)$ is a complex vector bundle $E$  equipped  with an
 involutive anti-linear bundle isomorphism  $\iota: E \rightarrow E$
 such that $\iota^2 = id_E$, and these two involutions are compatible, i.e., $\pi \circ \iota = \tau \circ \pi$.
\end{defn}

In the above definition, if the involution $\iota$ (s.t. $\iota^2 = 1$) is replaced by an anti-involution $\chi$
(s.t. $\chi^2 = -1$), then the pair $(E, \chi)$ defines a Quaternionic vector bundle over $(X, \tau)$.

\begin{defn} For a compact Real space $(X, \tau)$, the Real K-group $KR(X, \tau)$
is defined to be the Grothendieck group of finite rank Real vector bundles $\pi: (E, \chi) \rightarrow (X, \tau)$.
\end{defn}

If the involution $\tau$ is understood, then it is always omitted in the notation, i.e.,  $KR(X) = KR(X, \tau)$.
Higher KR-groups are defined by
$$
KR^{-p, -q}(X) := KR(X \times \mathbb{R}^{p, q})
$$
There exists an isomorphism, i.e., $(1,1)$-periodicity of KR-theory,
$$
KR^{p+1, q+1}(X) \cong KR^{p,q }(X)
$$
so by convention a KR-group is denoted by $KR^{p-q}(X) =KR^{p,q }(X) $. The Bott periodicity of KR-theory is 8,
$$
KR^{n+ 8 }(X) \cong KR^{n}(X)
$$
As usual, $\widetilde{KR}^n(X)$ denotes the reduced KR-group.

The  Quaternionic K-group $KQ(X, \tau)$ is similarly defined as the Grothendieck group of Quaternionic vector bundles $(E, \chi)$ over $(X, \tau)$.
There exists a canonical isomorphism between KR-theory and KQ-theory,
$$
KQ^n(X) \cong KR^{n-4}(X)
$$

When the involution $\tau$ in KR-theory is trivial, it becomes the real K-theory over $\mathbb{R}$, i.e., KO-theory, 
$$
KR^n(X, \tau = id) = KO^n(X)
$$
which also has periodicity 8: $KO^{n}(X) \cong KO^{n+8}(X)$.
The KO-theory of a point, for simplicity denoted as $KO^{n} = KO^n(pt)$, is given by the table

\begin{center}
 \begin{tabular}{||c| c| c| c | c | c | c | c |c |c |c ||}
 \hline
   i & 0 & 1 & 2 &  3 & 4 & 5 & 6 & 7  \\ [0.5ex]
 \hline\hline
 $KO^{-i}$ & $\mathbb{Z}$ & $\mathbb{Z}_2$ & $\mathbb{Z}_2$   & 0 & $\mathbb{Z}$  & 0  & 0 & 0 \\ [1ex]
 \hline
\end{tabular}
\end{center}

\subsection{Index theory} 

When a topological invariant belongs to a K-group, it is expected to be an index number.  
Furthermore, when the topological invariant can be computed as a topological index, an index theorem is expected. In the physics literature,
there are many examples of topological insulators and superconductors, and the relevant invariants can be computed as Chern numbers or winding numbers. 
After the work of Kitaev, the mathematical physics community believes index theory and K-theory should play a significantly important role in topological phase of matters.

As a milestone in  mathematics and physics, the Atiyah--Singer index theorem  
states that for an elliptic differential operator on a compact manifold the analytical index   is equal to the topological index.
In its modern form, the index theorem is a statement in K-theory. 
Given an effective Hamiltonian, one is interested in its  band structure.  
On the one hand, one considers the zero modes of the given Hamiltonian, whose dimension basically gives the analytical index. 
On the other hand, one models the band structure by a topological vector bundle, i.e., topological band theory, so one obtains an element in K-theory, 
which is the natural receptacle of the topological index. Through the spectral flow, one makes a connection between the analytical index and topological index. 
To prove the Atiyah--Singer index theorem using K-theory, one needs to reduce an elliptic operator to a Dirac operator, and make use of the Bott-periodicity and Thom isomorphism. 
Based on the Atiyah--Singer index theorem, the analytical index can be computed by the topological index. 

Given an effective Hamiltonian $H$ (equipped with symmetries), which is always assumed to be Fredholm,  one can define the Fredholm index as the analytical index, e.g., 
$$
ind_a(H) = \dim ker H - \dim coker H 
$$
Depending on the given symmetries, $H$ falls into different subclasses of Fredholm operators. The classifying spaces of KR-theory 
was discussed in Atiyah and Singer's original paper \cite{AS69}, later mentioned by Lott in \cite{L88},
and recently reformulated in the context of topological insulators in \cite{GS15}. 

The topological index maps from K-theory of the cotangent bundle $T^*X$ to that of an abstract point, in KR-theory  it is written as
$$
ind_t: KR^n(T^*X) \rightarrow KO^{n}(pt)
$$
For a $d$-dimensional involutive space $(X, \tau)$, the Thom isomorphism in $KR$-theory is  
\begin{equation*}
   KR^{-i}(X) \cong KR^{d-i}(T^*X)
\end{equation*}
Combining the above two maps, one obtains a map from KR-groups of $X$ to that of a point, still called  the topological index map,
$$
ind_t:  KR^{-i-d}(X) \rightarrow KO^{-i}(pt)
$$

For a local topological index formula, the even and odd cases should be treated differently. For example, in two dimensions the first Chern number (as a topological index) is the integral of the first Chern character
$$
c_1 = ind_t(p) = \frac{1}{2\pi}\int_X tr(pdpdp) = \frac{1}{2\pi} \int_X ch_1(p)
$$
where $p$ is a projection representing the relevant vector bundle.
In three dimensions,  the winding number (as an odd topological index) is the integral of the odd Chern character  
$$
ind_t(g) = \frac{1}{4\pi^2} \int_X tr(g^{-1}dg)^3 = \frac{1}{4\pi^2} \int_X ch_3(g)
$$
where $g: X \rightarrow U(m)$ is a gauge whose K-theoretic class $[g]$ is in an odd K-group. 
The odd index theorem is intimately related to spectral flow. 
The Chern character is a map from K-theory to de-Rham cohomology, in a general case, one has to include the defective class such as the Todd class or A-hat genus in the integral. 

\section{Time reversal symmetry}\label{sec:TRS}

In this section, we will discuss about the  time reversal  symmetry  $\mathcal{T}$ and topological phases of time reversal invariant  topological insulators: type AII and AI.
Type AII gives an important class of $\mathbb{Z}_2$ topological insulators, whose index theorem has been extensively studied in \cite{KLW15}. Type AI is not so interesting
since there is no nontrivial invariants in physics.

By definition,  time reversal symmetry (TRS) is the  $\mathbb{Z}_2$ symmetry  that reverses the direction of time.
 $\mathcal{T}$ changes the sign of the imaginary unit $\mathcal{T}:   i \mapsto -i$, so the time reversal operator  $\Theta$ (representing $\mathcal{T}$)   must be anti-unitary.
 The time reversal transformation defined
 on momentum spaces basically changes the sign of local coordinates.
\begin{defn}
 Time reversal symmetry defines the time reversal transformation on the momentum space $X$,
 \begin{equation*}
   \tau: X \rightarrow X; \quad   {x} \mapsto -{x}
 \end{equation*}
so that $(X, \tau)$ is an involutive space.
\end{defn}

Given a single-particle Hamiltonian $H(x)$ parameterized by the momentum space $X$, $H$ is time reversal invariant if
 \begin{equation}\label{TRIH}
   \Theta H(x) \Theta^* = H(\tau(x)), \quad \forall \,\, x \in X
 \end{equation}
Let us consider the  Hilbert bundle $\pi: \mathcal{H} \rightarrow X$, which models continuous fields of Hilbert spaces of physical states.
Time reversal symmetry is represented by the time reversal operator
 $\Theta$ acting on the Hilbert bundle $\mathcal{H}$. In general,
  $\Theta$ can be expressed as
 a product  $\Theta = UK$, where $U$ is a unitary operator and $K$ is the complex conjugation.
$\Theta$ is  an anti-linear anti-unitary operator, that is, for $\psi, \phi \in \Gamma(X, \mathcal{H})$,
\begin{equation*}
  \langle \Theta \psi, \Theta \phi \rangle = \overline{\langle  \psi,  \phi  \rangle} = \langle  \phi, \psi \rangle, \quad \quad \Theta (a \psi + b \phi) = \bar{a} \Theta\psi + \bar{b} \Theta \phi, \,\,\,\, a, b\in \mathbb{C}
\end{equation*}

 \begin{examp}
    In a spinless two-band model, the time reversal operator $\Theta = UK$ with $U := i\sigma_y$, where $\sigma_y$ is the second Pauli matrix. 
    In a spinful two-band model, $U$ is defined by $U :=-i\sigma_y  \otimes \tau_x$, where $\tau_x$ is the first Pauli matrix representing spin.
 \end{examp}

Taking  time reversal symmetry into account, the Hilbert bundle  $\pi: \mathcal{H} \rightarrow X$ becomes either
a Quaternionic Hilbert bundle $\pi: (\mathcal{H}, \Theta) \rightarrow (X, \tau)$ if $\Theta^2 = - id_\mathcal{H}$ or  a Real Hilbert bundle if $\Theta^2 = id_\mathcal{H}$.

\subsection{Type AII}

Consider the odd time reversal symmetry such that $\mathcal{T}^2 = -1$, denoted by TRS$-1$,
which can be found for example in the quantum spin Hall effect \cite{KM05}. 
In the Cartan--Altland--Zirnbauer classification, time reversal invariant  topological insulators with odd TRS fall into type AII, also called the symplectic class.  
This subsection reviews the index theorem of type AII topological insulators  following \cite{KLW15}.

If a Hamiltonian $H$ is time reversal invariant and the time reversal operator $\Theta$ is odd, 
$$
\Theta H(x) \Theta^* = H(\tau(x)), \quad \Theta^2 = -1
$$
then every energy level of $H$ is doubly degenerate. Namely, if $\phi$ is an eigenstate of $H$, then $\Theta \phi$ is an
orthogonal eigenstate of $H$ with the same energy.
This is  the Kramers degeneracy  for a   system in Type AII.

For simplicity, we assume the Hilbert bundle  $\pi: (\mathcal{H}, \Theta) \rightarrow (X, \tau)$ has rank 2. 
In this case, the Hilbert bundle is a Quaternionic vector bundle $\pi: (\mathcal{H}, \Theta) \rightarrow (X, \tau)$ s.t. $\Theta^2 = -1$. 
The transition function of $\mathcal{H}$, denoted by $w : X \rightarrow U(2)$, can be used to represent the time reversal symmetry. 

\begin{lem} \label{LemTRS}
 There exists an involution $\rho$  on the structure group $U(2)$ induced by the time reversal symmetry.
 
\end{lem}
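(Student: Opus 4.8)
The plan is to produce the involution explicitly from the anti-unitary time reversal operator and then verify that it is an involutive automorphism of $U(2)$. Because $\mathcal{H}$ has rank $2$ and $\Theta^2 = -1$, I would first fix the unitary part of $\Theta = UK$ to be $U = i\sigma_y$, as in the spinless two-band example, and record the defining relation of odd TRS at the matrix level, namely $\Theta^2 = U\bar{U} = -I$ (using $K^2 = 1$ and $KAK = \bar{A}$ for any matrix $A$). For this choice one has $U\bar{U} = U^2 = -I$ directly.

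Next I would translate the bundle structure into a condition on the transition function. Writing $\Theta = UK$ and using $\Theta^* = \Theta^{-1} = KU^{-1}$ for the anti-unitary $\Theta$, the time reversal invariance condition \eqref{TRIH} reads $U\,\overline{H(x)}\,U^{-1} = H(\tau(x))$. The same mechanism governs how $\Theta$ identifies the local trivializations of $\mathcal{H}$ over a patch near $x$ with those over the $\tau$-image patch near $\tau(x)$; compatibility of the quaternionic structure with these frames forces the transition function to satisfy the twisted reality condition $w(\tau(x)) = U\,\overline{w(x)}\,U^{-1}$. This is what suggests defining
\begin{equation*}
  \rho : U(2) \longrightarrow U(2), \qquad \rho(w) = U\,\bar{w}\,U^{-1}.
\end{equation*}

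I would then confirm the formal properties. The map is well defined since $\bar{w}$ is unitary ($\bar{w}^*\bar{w} = \overline{w^*w} = I$) and conjugation by the unitary $U$ preserves $U(2)$; it is a homomorphism because complex conjugation and conjugation-by-$U$ are each automorphisms, so $\rho(w_1w_2) = \rho(w_1)\rho(w_2)$. The involution property is where the odd TRS enters: $\rho^2(w) = (U\bar{U})\,w\,(U\bar{U})^{-1}$, and since $U\bar{U} = -I$ is central this collapses to $w$, giving $\rho^2 = \mathrm{id}_{U(2)}$.

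The homomorphism and unitarity checks are routine; the step carrying the actual content is the second paragraph, where one must correctly read off the twisted reality condition on $w$ from the action of the anti-unitary $\Theta$ on local frames. The subtlety to handle carefully there is the bookkeeping of complex conjugation together with the sign $U\bar{U} = -1$ coming from $\Theta^2 = -1$; because this sign is central in $U(2)$ it drops out under conjugation in $\rho^2$, so one obtains an honest involution rather than an order-four map.
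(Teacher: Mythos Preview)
Your argument is internally consistent and does produce an involution on $U(2)$, but it follows a different construction from the paper's, and the two yield \emph{different} involutions.

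The paper does not fix a gauge for $\Theta$. Instead it takes $w(x)$ to be the local matrix of $\Theta$ itself, writing $\Theta:(x,v)\mapsto(\tau(x),w(x)\bar v)$; squaring and imposing $\Theta^2=-1$ gives $w(\tau(x))\overline{w(x)}=-I_2$, i.e.\ $w(\tau(x))=-w(x)^{T}$, so the induced involution is $\rho(g)=-g^{T}$. Your route instead \emph{freezes} the local form of $\Theta$ to $UK$ with $U=i\sigma_y$ and lets $w$ denote the clutching function between charts; compatibility then gives $w(\tau(x))=U\,\overline{w(x)}\,U^{-1}$ and hence $\rho(g)=U\bar g U^{-1}$. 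These are genuinely different maps on $U(2)$ (for instance, the fixed set of the paper's $\rho$ is the circle of skew-symmetric unitaries, while the fixed set of yours is $SU(2)$). Both satisfy $\rho^2=\mathrm{id}$, so both prove the lemma as stated.

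What each buys: the paper's $\rho(g)=-g^{T}$ needs no gauge choice and immediately gives $w(x)^{T}=-w(x)$ at fixed points $x\in X^{\tau}$, which is exactly the skew-symmetry used downstream (Pfaffians, the $\mathbb{Z}_2$ index). Your $\rho$ has the pleasant feature of being a genuine group automorphism, whereas $g\mapsto -g^T$ is not multiplicative; but it depends on the assertion that local frames can be chosen so that $\Theta$ acts as $i\sigma_y K$ in each chart. That is true for a rank-$2$ Quaternionic bundle, but you should say why rather than import it from the two-band example, and you should keep the two roles of $w$ (sewing matrix for $\Theta$ versus transition cocycle of $\mathcal{H}$) cleanly separated.
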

\begin{proof}
   Choose an open subset $O \subset X$,
then the local isomorphism $\Theta : \mathcal{H}|_{O } \rightarrow \mathcal{H}|_{\tau(O) } $ is represented by
\begin{equation*}
   \Theta:  O \times \mathbb{C}^2 \rightarrow  \tau(O)  \times \mathbb{C}^2, \quad (x, v) \mapsto (\tau({x}), w(x) \, \bar{v})
\end{equation*}
Apply $\Theta$ twice to get back to $O$,
\begin{equation*}
   (x, v) \mapsto (\tau({x}), w(x) \, \bar{v}) \mapsto (x, w(\tau({x}))\bar{w}(x) \, {v})
\end{equation*}
because of $\Theta^2 = -1$, we have
\begin{equation}
   w(\tau({x}))\bar{w}(x) = -I_2, \quad \text{i.e.,} \quad  w(\tau({x})) = -{w}^T(x)  
\end{equation}
where $T$ stands for the  transpose of a matrix.
So the time reversal symmetry induces an involution 
$$
\rho: U(2) \rightarrow U(2); \quad g \mapsto -g^{T} \quad s.t. \,\, \rho^2 = 1
$$
 satisfying the commutative diagram
   $$
   \xymatrix{
X \ar[d]^\tau \ar[r]^w &U(2)\ar[d]^\rho\\
X \ar[r]^w          & U(2)}
   $$ 
\end{proof}
  
In other words, $w$ defines a map between involutive spaces    
   \begin{equation} \label{TransFunc}
      w: (X, \tau ) \rightarrow (U(2), \rho)
   \end{equation}
   such that  $\rho \circ w = w \circ \tau$. 
So the time reversal operator $\Theta$ acting on  $\mathcal{H}$ is locally represented by $w \circ K$, where $K$ is the complex conjugation.
In particular,  $w$ is skew-symmetric at any fixed point $x \in X^\tau$, i.e., $ w^T(x) =- w ({x})$.

\begin{examp}
  When $X = \mathbb{S}^3 = \{ (\alpha, \beta) \in \mathbb{C}^2, \, s.t. \, |\alpha|^2 + |\beta|^2 = 1 \}$,
   the time reversal transformation is defined by $\tau (\alpha, \beta) = (\bar{\alpha}, - \beta)$,
  and the fixed points are    $(\alpha, \beta) = (\pm 1, 0)$.
  In addition, the transition function $w$ is given  by
  $$
  w: \mathbb{S}^3 \rightarrow SU(2); \quad ( \alpha, \beta) \mapsto \begin{pmatrix}
                                                                     \beta & \alpha \\
                                                                     -\bar{\alpha} & \bar{\beta}
                                                                    \end{pmatrix}
  $$
\end{examp}

By definition, the Quaternionic K-group $KQ(X, \tau)$ classifies stable isomorphism classes of Quaternionic vector bundles over $X$. 
For the basic examples, the explicit computation of KR-theory is given as follows.
By the decomposition $\mathbb{S}^{1,d} =  \mathbb{R}^{0,d} \cup \{ \infty \} $, one fixed point of TRS is $\infty$ and
the other is $0 \in \mathbb{R}^{0,d}$. Furthermore, one can compute the KR-groups of spheres based on the above decomposition,
$$
KR^{-i} (\mathbb{S}^{1,d}) = KO^{-i}(pt) \oplus KR^{-i}(\mathbb{R}^{0,d}) = KO^{-i}(pt) \oplus  KO^{-i+d}(pt)
$$
Similarly, one can decompose $\mathbb{T}^d = (\mathbb{S}^{1,1})^d$ into fixed points plus involutive Eulidean spaces
$\mathbb{R}^{0, k}$ ($0 \leq k \leq d$) so that the KQ-groups of $\mathbb{T}^d$ are computed based on an iterative decomposition using the same trick,
\begin{equation*}
 KR^{-i}(\mathbb{T}^d) = \oplus_{k =0}^d \binom{d}{k} KO^{-i+k}(pt)
\end{equation*}
The relevant KQ-groups are collected in the following table.

\begin{center}
 \begin{tabular}{||c | c | c | c |c  ||}
 \hline
   & d=1 & d=2 & d=3  \\ [0.5ex]
 \hline\hline
 $\widetilde{KQ}(\mathbb{S}^{1,d})$  & 0 & $\mathbb{Z}_2$ &  $\mathbb{Z}_2$    \\
  \hline
 $\widetilde{KQ}^{-1}(\mathbb{S}^{1,d})$   & $\mathbb{Z}$ & 0 & $\mathbb{Z}_2$    \\
 \hline
  $\widetilde{KQ}(\mathbb{T}^{d})$   & 0 & $\mathbb{Z}_2$ &  $\mathbb{Z}_2^4$    \\
  \hline
 $\widetilde{KQ}^{-1}(\mathbb{T}^{d})$  & $\mathbb{Z}$ & $\mathbb{Z}^2$  & $\mathbb{Z}^3 \oplus \mathbb{Z}_2$    \\ [1ex]
 \hline
\end{tabular}
\end{center}
\begin{rmk}
  In the periodic table, only the strong topological invariants are listed. 
  By index theory, the strong topological invariants of type AII topological insulators belong to $KO^{-2}(pt) = \mathbb{Z}_2$, see below. In the above table, 
  these are the $\mathbb{Z}_2$ components in $\widetilde{KQ}(\mathbb{T}^2)$ and $\widetilde{KQ}^{-1}(\mathbb{T}^3)$, similarly for spheres.
  In addition, the above computation based on the decomposition of torus and sphere gives the same strong topological invariants,
  so   we will focus on spheres and   ignore the weak topological invariants from now on.
\end{rmk}

The geometric object to study in  type AII   topological insulators is Majonara zero modes. 
 Let $H(x)$ be a single-particle Hamiltonian (e.g., a Dirac Hamiltonian) parametrized by the momentum space $ X $, it is time reversal invariant, that is,  for any $x\in X$,  
$\Theta H(x) \Theta^{*} = H(\tau(x))$.  
The effective Hamiltonian of a free fermionic topological insulator is defined by
   \begin{equation*}
      \tilde{H}(x) := \begin{pmatrix}
                     0 & \Theta H(x) \Theta^* \\
                     H(x) & 0
                  \end{pmatrix}
   \end{equation*}  acting on a pair of states $ (\psi, \Theta \psi) \in \Gamma(X, \mathcal{H})$.
Define a real structure  by  $\mathcal{J} := \begin{pmatrix}
                                                      0 & \Theta^* \\
                                                      \Theta & 0
                                                    \end{pmatrix}$ such that  $\mathcal{J}^2 = 1$, 
so the pair $\Psi = (\psi, \Theta \psi)$ is called a Majorana state, since $\Psi$ satisfies the real condition $\mathcal{J} \Psi = \Psi $. 

 \begin{defn}
      A Majorana zero mode  is defined as a
localized Majorana state $\Psi_0 = (\psi_0 , \Theta \psi_0 )$ in a small neighborhood of a fixed point, say $x\in X^\tau$, so that $\psi_0$ and $\Theta \psi_0$
have zero energy at $x$, i.e., $\psi_0(x) = \Theta \psi_0(x) = 0$.
   \end{defn}
     The local geometry of a Majorana zero mode is a conical singularity  $V(x^2 + y^2 - z^2)$ in 3d, 
     that is, $\psi_0$ and  $\Theta \psi_0$ intersect with each other at that fixed point to form a Dirac cone.
 
 \begin{defn}
  The topological $\mathbb{Z}_2$ invariant of a time reversal invariant
topological insulator is defined as the parity of Majorana zero modes.
 \end{defn}

  Near a fixed point, the single-particle Hamiltonian $H$ can be approximated by a Dirac operator $D$, so that the effective Hamiltonian $\tilde{H}$
  is accordingly approximated by a skew-adjoint operator,
  $$
  \tilde{D} := \begin{pmatrix}
               0 & -D \\
               D & 0
              \end{pmatrix}
  $$

  Since $\tilde{H}$ can be approximated by the skew-adjoint operator $\tilde{D}$,
  the topological $\mathbb{Z}_2$ invariant $\nu$ is the  mod 2 analytical index of  $\tilde{H}$, 
\begin{equation}
 \nu = ind_a(\tilde{H}) := \dim \ker \tilde{H}  \quad \text{(mod 2)} 
\end{equation}
which can be computed by the spectral flow of the self-adjoint operator $H$ modulo 2.
In other words, the effective Hamiltonian has KO-dimension $2$, and its analytical index falls into $KO^{-2}(pt)$,
\begin{equation}
     ind_a:  \hat{\mathscr{F}} (L^2(X, \mathcal{H}), \mathcal{J}) \rightarrow  KO^{-2}(pt), \quad ind_a(\tilde{H}) \in \mathbb{Z}_2
\end{equation}
where $\hat{\mathscr{F}}$ denotes the space of skew-adjoint Fredholm operators.

The topological index map in this case is given by
\begin{equation*}\label{topindmap}
   ind_t : KR^{-2}(T^*X) \rightarrow KO^{-2}(pt)  
\end{equation*}
where $\pi: T^*X \rightarrow X$ is the cotangent bundle over $X$.    
Combining it with the Thom isomorphism in $KR$-theory,
\begin{equation*}
   KR^{-j}(X) \cong KR^{d-j}(T^*X)
\end{equation*}
we obtain the topological index map from $KQ$-theory to $KO^{-2}(pt)$.

\begin{examp}[AII 3d]
  When $X = \mathbb{T}^3$, the topological index map is 
   $$
    ind_t: KQ^{-1}(\mathbb{T}^3) = KR^{-5}(\mathbb{T}^3)  \cong KR^{-2}(T^*\mathbb{T}^3)  \rightarrow KO^{-2}(pt) 
   $$
   The transition function of the Hilbert bundle $\pi: \mathcal{H} \rightarrow X$, i.e.,  $w: (X, \tau) \rightarrow (U(2), \rho)$ 
   gives a generator of the top $\mathbb{Z}_2$ component in $\widetilde{KQ}^{-1}(X)$.
   The odd topological index  is the winding number of $w$,
   \begin{equation}\label{wn}
      ind_t(w) = \frac{1}{4\pi^2}\int_X tr(w^{-1}dw)^{3}
   \end{equation}
  which is naturally $\mathbb{Z}_2$-valued due to $w \circ \tau = \rho \circ w$.
\end{examp}

\begin{examp}[AII 2d]\label{2dTopInd}
   When $X = \mathbb{T}^2$,  the topological index map is
   $$
    ind_t: KQ(\mathbb{T}^2) = KR^{-4}(\mathbb{T}^2)  \cong KR^{-2}(T^*\mathbb{T}^2) \rightarrow KO^{-2}(pt) 
   $$
   The integral of the first Chern character modulo 2, i.e., the mod 2 version of the first Chern number ($c_1$ mod 2), cannot give a local
topological index formula in 2d.
 By the $\mathbb{Z}_2$-CW complex structure, $\mathbb{T}^2$ can be decomposed into two copies of $\mathbb{T} \times I$, where $I$ is the unit interval. 
 The Hilbert bundle  $\pi: \mathcal{H} \rightarrow \mathbb{T}^2$ is completely determined by the transition function on the boundary 
 $w: \partial (\mathbb{T} \times I) = S^1 \sqcup S^1 \rightarrow U(2)$. 
 In this case, we propose the topological index formula as 
 \begin{equation}
     ind_t(w) = \frac{1}{4\pi} \int_{S^1 \sqcup S^1} tr(w^{-1}dw)
 \end{equation}
 which is naturally $\mathbb{Z}_2$-valued.
\end{examp}

Based on the Atiyah--Singer index theorem,  the  $\mathbb{Z}_2$ invariant of type AII topological insulators can be understood as a mod 2 index theorem in KR-theory,
that is, the analytical index equals the topological index,
 \begin{equation}
    ind_a(\tilde{H}) = ind_t(w)
 \end{equation}

\subsection{Type AI}
When the time reversal symmetry is even, i.e., $\mathcal{T}^2 = 1$ and denoted by TRS$+1$, 
it falls into the orthogonal class or type AI. 
If $\Theta$ is the time reversal operator representing $\mathcal{T}$, then a Hamiltonian $H$ in this class satisfies
$$
\Theta H(x) \Theta^* = H(\tau(x)), \quad \Theta^2 = 1
$$
Hence the topological band theory of type AI is described by a Real vector bundle $\pi: (\mathcal{H}, \Theta) \rightarrow (X, \tau)$, which is classified by $KR(X)$. 
In this case, KR-theory gives no information in spatial dimensions one, two and three. 
\begin{center}
 \begin{tabular}{|| c | c | c |c  ||}
 \hline
   & d=1 & d=2 & d=3  \\ [0.5ex]
 \hline\hline
 $\widetilde{KR}(\mathbb{S}^{1,d})$ & 0 & 0 &  0  \\
  \hline
\end{tabular}
\end{center}

\section{Particle hole symmetry} \label{sec:PHS}
Topological superconductors gives another important species of general topological insulators,
the relevant discrete symmetry is given by the particle hole symmetry \cite{QZ11, SA17}. 
Before the development of topological insulators, there were several interesting models of topologically protected superconductors in the physics literature. 
For example, Volovik discussed nontrivial topology in 3d superfluid helium 3 ($^3He$) in his book \cite{V03}. 
Read and Green  discussed topologically nontrivial superconductors in a 2d fractional quantum Hall system \cite{RG00},  
and  Kitaev proposed a 1d Majonara chain model \cite{K01}.
The above two models both considered spinless p-wave superconductors, 
which basically describes Majorana zero modes in the vortex core in the 2d case or at the edge in the 1d case. 
After the discovery of topological insulators, topological superconductors are understood in the framework of symmetry protected topological (SPT) phases, 
and later included into the periodic table.

As a remark, in the study of topological superconductors, the $SU(2)$  rotation symmetry and $U(1)$ symmetry (rotation about the $z$-axis in spin space if there is no full $SU(2)$ symmetry) are also very important.
However, KR-theory does not reflect the properties of $U(1)$ or $SU(2)$ symmetry, which could be handled by equivariant twisted K-theory \cite{FM13}. 

A Majorana zero mode in topological superconductors consists of an electron and a hole (also called a Cooper pair), 
which is similar to a Majorana zero mode (or a Dirac cone ) in time reversal invariant topological insulators. 
The quantum vortex in a topological superconductor with odd winding numbers carries an odd number of Majorana zero modes,
which can be understood as an index theorem, for example see \cite{FF10, TDL07}.

Similar to time reversal symmetry, particle hole symmetry (PHS) is the key ingredient in topological superconductors.
In high energy physics, PHS is also called the charge conjugation symmetry, denoted by $\mathcal{C}$.
Based on the mean field method, topological superconductors can be described by effective
Bogoliubov-de Gennes (BdG) Hamiltonians, which  are  invariant under PHS.

\begin{examp}
 In physics, a BdG Hamiltonian is commonly written as
$$
H_{BdG}(\mathbf{k})  = \begin{pmatrix}
            h(\mathbf{k})  &   \Delta(\mathbf{k}) \\
              \Delta^\dagger(\mathbf{k}) &  -h^T(-\mathbf{k})
          \end{pmatrix}
$$  
where $h$ is the normal-state dispersion 
 and $\Delta: X \rightarrow M_2(\mathbb{C})$ is a pairing potential (also called a gap function).
In general, a Cooper pair is formed by pairing two spin-1/2 electrons, the spin angular momentum of a pairing potential is either 0 (spin-singlet) or 1 (spin-triplet).
 For singlet pairings (e.g. d-wave), $\Delta(x) =  \Delta(-x)$; for triplet pairings (e.g. p-wave), $\Delta(x) = - \Delta(-x)$.
A general gap function $\Delta(x) = \Delta_0(x)\sigma_0 + \vec{d}(x) \cdot \vec{\sigma} $ has an even function $\Delta_0$ and an odd vector function
 $\vec{d}$, where  $\vec{\sigma} = (\sigma_x, \sigma_y, \sigma_z)$ is the spin vector. 
\end{examp}

\begin{rmk}
   The dagger symbol $(\dagger)$ is used by physicists to  mean Hermitian conjugate operation, e.g., $A^\dagger = A$ means $A$ is a self-adjoint operator $A^* = A$,
   or a real condition, e.g., $\psi^\dagger = \psi$ means $\psi$ is a real (or Majonara) state. If there are more than one real structures, we have to tell them apart.
   For example, if $C$ is the charge conjugation, then  $\psi^\dagger = \psi$ really means $C\psi = \psi$, and $C$ is different from the adjoint operation $*$.
\end{rmk}

Let $H$ be a self-adjoint Fredholm operator and $C$ be the charge conjugation operator representing PHS. 
If $H$ models an electron, then $CHC^*$ models a hole, which is the anti-particle of an electron. 
The pairing potential operator $\Delta$ describes the pairing between the electron and hole, which takes on different forms depending on the material.
\begin{defn} A BdG Hamiltonian in our notation is defined by
 \begin{equation}
  H_{BdG}(x)  := \begin{pmatrix}
            H(x)  &    C\Delta(x)C^* \\
              \Delta(x) &  CH(x)C^*
          \end{pmatrix} 
 \end{equation}
acting on a quasi-particle $\Psi = (\psi, C\psi)$. 

\end{defn}

The particle hole symmetry defines the same involution on the momentum space as that of time reversal symmetry, i.e.,
  \begin{equation*}
   \tau: X \rightarrow X; \quad   {x} \mapsto -{x}
 \end{equation*}
The particle hole symmetry $\mathcal{C}$ is represented by an anti-unitary operator  called the charge conjugation operator $C$, which can be
written  as a product
$ C :=UK $ where U is a unitary and $K$ is the complex conjugation. A Hamiltonian $H$ is invariant with respect to PHS if and only if
$$
C H(x) C^* = - H(\tau(x)), \quad \forall \,\,x\in X
$$
By assumption, the BdG Hamiltonian is invariant under PHS, i.e.,
$$
C H_{BdG}(x) C^* = - H_{BdG}(\tau(x)), \quad \forall \,\, x \in X
$$
which implies a condition on the pairing potential operator,
$$
C\Delta(x) C^* = - \Delta(\tau(x))
$$

 Let $H$ be a  single-particle Hamiltonian,  $\psi_e$ be an electronic state satisfying the eigenvalue equation
 $$
 H(x)\psi_e(x) = E(x)\psi_e(x), \quad x \in X
 $$
 after applying the charge conjugation operator, one obtains
 $$
 CH(x)C^* \, C\psi_e(x) = CE(x)C\psi_e(x)
 $$
 If $C\psi_e$ is identified with a state $\psi_h$ describing a hole, then the above equation is written as,
 $$
 -H(\tau(x)) \psi_h(\tau(x)) = E(\tau(x))\psi_h(\tau(x))
 $$
 Or equivalently, $\psi_h$  satisfies the following eigenvalue equation,
 $$
 H(x)\psi_h(x) = -E(x)\psi_h(x), \quad x \in X
 $$
 Note that if the electron has eigenvalue $E$, then the hole has eigenvalue $-E$.

 Consider a quasi-particle  $\Psi =  (\psi, C\psi)$, $\Psi$ satisfies the eigenvalue equation
\begin{equation*}\label{EigenEq}
  H_{BdG}(x) \Psi(x) = E_{BdG}(x) \Psi(x), \quad x \in X
\end{equation*}
If the energy scale of the pairing potential $\Delta$ is much smaller than $H$, then $\Delta$ can be neglected and
the above eigenvalue equation  is approximated by
$$
\begin{pmatrix}
  H(x) & 0 \\
  0 & - H(\tau(x)) 
\end{pmatrix}  \begin{pmatrix}
                 \psi(x) \\
                 \psi(\tau(x))
               \end{pmatrix} = \begin{pmatrix}
                                 E(x) & 0 \\
                                 0 & -E(\tau(x))
                               \end{pmatrix} \begin{pmatrix}
                 \psi(x) \\
                 \psi(\tau(x))
               \end{pmatrix}
$$
In this case,  $\Psi$ is called a weak Cooper pair. 

\begin{lem}
  Based on a Bogoliubov transformation, the BdG Hamiltonian $H_{BdG}$ has an off-diagonal form $\tilde{H}_{BdG}$ acting on Bogoliubov quasi-particles.
\end{lem}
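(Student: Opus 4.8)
The plan is to produce an explicit Bogoliubov transformation, i.e.\ a unitary $W$ on the doubled (Nambu) space that mixes the particle and hole sectors while preserving the canonical anticommutation relations, and to verify that conjugation by $W$ annihilates the diagonal blocks of $H_{BdG}$. The only structural input needed is the particle--hole symmetry of the BdG Hamiltonian already recorded above, namely $C H_{BdG}(x) C^{*} = -H_{BdG}(\tau(x))$, where $C$ is now read as the charge conjugation on the doubled space, an anti-unitary with $C^{2} = \pm 1$ depending on the class. This single relation forces the spectrum to be symmetric about zero, and it plays for $H_{BdG}$ exactly the role that the real structure $\mathcal{J}$ and the skew-adjoint $\tilde{D}$ played for Type AII; the goal is to bring $H_{BdG}$ into the same manifestly off-diagonal shape.

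The key step is to pass to the Majorana (real) basis. I would conjugate by the standard transformation $\Omega = \tfrac{1}{\sqrt{2}}\begin{pmatrix} 1 & 1 \\ i & -i \end{pmatrix}$, tensored with the identity on the single-particle Hilbert space (pre-rotated by the unitary part of $C$ if $C\neq \tau_x K$); this is itself a Bogoliubov transformation and it carries $C$ to plain complex conjugation $K$. In the case $C^{2} = +1$ (class D), the transformed symmetry $K\,(\Omega^{*} H_{BdG}\Omega)\,K = -\,\Omega^{*} H_{BdG}\Omega$ together with self-adjointness forces $\Omega^{*} H_{BdG}(x)\Omega = i\,A(x)$ with $A(x)$ real and skew-symmetric. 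This is precisely the skew-adjoint form that places the effective Hamiltonian in the space $\hat{\mathscr{F}}$ of the previous section, so the Majorana basis already makes contact with the $KO^{-2}$ index.

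To reach the off-diagonal shape I would then invoke the real canonical form of a skew-symmetric operator: there is an orthogonal $O(x)$ with $O(x)^{T} A(x) O(x)$ block diagonal, each block being $\begin{pmatrix} 0 & \lambda_{k}(x) \\ -\lambda_{k}(x) & 0 \end{pmatrix}$. Because $O(x)$ is real it commutes with $K$, so $W = \Omega\, O$ is again a genuine Bogoliubov transformation, and $\tilde{H}_{BdG} = W^{*} H_{BdG} W$ consists of off-diagonal $2\times 2$ blocks $\begin{pmatrix} 0 & i\lambda_{k} \\ -i\lambda_{k} & 0 \end{pmatrix}$. Each such block couples two Majorana modes into a complex Bogoliubov quasi-particle, with the state at energy $+\lambda_{k}$ and its conjugate at $-\lambda_{k}$ forming one pair, which is exactly the asserted off-diagonal form acting on Bogoliubov quasi-particles.

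The main obstacle is not the single-fibre linear algebra but its global and class-dependent upgrade. First, everything must be carried out at the level of Fredholm operators over the involutive base $(X,\tau)$: the gap condition on $\Delta$ must be used to preserve Fredholmness, and one must check that $O(x)$, equivalently the block decomposition, can be chosen continuously in $x$ and equivariantly with respect to $\tau$, so that $\tilde{H}_{BdG}$ is a well-defined bundle map rather than a fibrewise normal form. Second, the sign of $C^{2}$ is class-dependent: for class C, where $C^{2} = -1$, the Majorana transformation produces a quaternionic rather than a real structure, so $A(x)$ is no longer simply real skew-symmetric and the canonical-form step must be replaced by its symplectic analogue. Treating both signs uniformly, while tracking which $KO$-degree the resulting skew-adjoint operator lands in, is where the genuine care is required.
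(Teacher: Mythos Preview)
Your route is genuinely different from the paper's, and considerably more elaborate than what the lemma actually requires. The paper does not pass to the Majorana basis and then invoke the real canonical form of a skew-symmetric operator; it simply applies a single \emph{constant} change of basis, the matrix $\begin{pmatrix} i & 1 \\ 1 & -i \end{pmatrix}$ acting on the particle--hole doublet $\Psi=(\psi,C\psi)$, and reads off directly that $H_{BdG}\Psi=\tilde H_{BdG}\tilde\Psi$ with
\[
\tilde H_{BdG}=\begin{pmatrix} 0 & H+iC\Delta C^{*} \\ CHC^{*}-i\Delta & 0 \end{pmatrix}.
\]
That is the whole argument: the ``off-diagonal form'' meant here is the $2\times 2$ block shape $\begin{pmatrix} 0 & Q \\ Q' & 0 \end{pmatrix}$ with $Q,Q'$ operators on the single-particle space, not a direct sum of scalar $2\times 2$ blocks.

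The practical difference is that your second step, the pointwise orthogonal $O(x)$ bringing $A(x)$ to canonical form, is exactly where the global obstacle you worry about lives, and it is unnecessary. Because the paper's transformation is independent of $x$, no continuity or $\tau$-equivariance question ever arises, and the resulting off-diagonal block $Q=H+iC\Delta C^{*}$ is already the object whose Fredholm index is analyzed downstream. Your construction buys more---essentially the full Bogoliubov quasi-particle spectrum $\{\pm\lambda_k(x)\}$---but for the purposes of this lemma and the index theory that follows, the cheaper constant rotation suffices and sidesteps the obstruction entirely. Your first step (the Majorana rotation $\Omega$) is in the same spirit as the paper's move; it is the subsequent spectral decomposition that overshoots.
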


\begin{proof}
  Given a quasi-particle $\Psi = (\psi, C\psi)$, one obtains a Bogoliubov quasi-particle, denoted by $\tilde{\Psi}$, by applying the following transformation,
  $$
  \begin{pmatrix}
    i & 1 \\
    1 & -i
  \end{pmatrix} 
  \begin{pmatrix}
    \psi \\
    C \psi
  \end{pmatrix} =  \begin{pmatrix}
                 i \psi +  C\psi \\
                   \psi -i C\psi   
                  \end{pmatrix} = \tilde{\Psi}
  $$
 Define 
 \begin{equation}
   \tilde{H}_{BdG} :=  \begin{pmatrix}
              0  &   H + iC\Delta C^* \\
            CHC^* - i   \Delta &  0
          \end{pmatrix}
 \end{equation}
we have 
 $$
 H_{BdG} \Psi =  \tilde{H}_{BdG} \tilde{\Psi}
 $$
In the physics literature, $\tilde{H}_{BdG}$ is always written diagonally, our motivation is to view it as a chiral system.
More precisely, $C(C\psi + i\psi) = \psi - i C\psi$ and  $C( H - iC\Delta C^*) C^* =  CHC^* + i\Delta$, so this form of  $\tilde{H}_{BdG}$ looks like a Dirac operator $D = \begin{pmatrix}
                                                                                                                              0 & D_+ \\
                                                                                                                              D_- & 0
                                                                                                                             \end{pmatrix}$ with $D_+^* = D_-$ so that $D_+: H_+ \rightarrow H_-$ changes the chirality of spinors.  
Notice that the real structure in our case is given by charge conjugation $C$ instead of adjoint operation $*$.

\end{proof}

\subsection{Type D}
Let PHS$+1$ denote  the even particle hole symmetry such that $\mathcal{C}^2 = 1$,
if a Hamiltonian $H$ is invariant under an even PHS, then the   quantum   system falls into type D.
For instance, in a spinless two-band model, the charge conjugation operator can be defined by $C = \sigma_xK$, here   $\sigma_x$ is the first Pauli matrix with $\sigma_x^2 =1$.

\begin{examp} 
In two dimensions, a spinless chiral  $(p_x \pm ip_y)$-wave    superconductor is a typical example in type D. The BdG Hamiltonian has the form
$$
H_{BdG} = \bar{\Delta} (k_x \tau_x + k_y \tau_y) + h(k_x, k_y) \tau_z = \begin{pmatrix}
                                                                         h(k_x, k_y) & \bar{\Delta} (k_x -ik_y) \\
                                                                         \bar{\Delta} (k_x +ik_y) & - h(k_x, k_y) 
                                                                        \end{pmatrix}
$$
where $\bar{\Delta} \in \mathbb{R}$ is the amplitude of the order parameter, the energy dispersion $h(k_x, k_y) = (k_x^2+ k_y^2)/2m$, 
and the gap function $\Delta(k_+)$ is a linear function in $k_+ = k_x +ik_y$.
\end{examp}

\begin{examp}\label{MajCh}
  The Kitaev Majonara chain is a nontrivial example in 1d, whose effective BdG Hamiltonian is given by 
 $$
 H_{Kitaev} = (-\mu - t\cos (k)) \tau_z + \Delta \sin( k) \tau_y 
 $$
 where $\mu, t, \Delta$ are constant parameters. The nontrivial phase in a Majonara chain is characterized by an unpaired Majonara bound state.
\end{examp}

Define a  real structure $I$ by
$$
I := \begin{pmatrix}
        0 & C \\
        C & 0
     \end{pmatrix}
$$
Since $C^2 = 1$ is even, i.e., $C^* = C$, $I$ is self-adjoint and $I^2 = 1$. 
With the real structure $I$, we have a real condition on $\Psi = (\psi, C \psi)$, i.e.,
$$
I\Psi = \Psi \quad \Longleftrightarrow \quad \begin{pmatrix}
                                           0 & C \\
                                            C & 0
                                           \end{pmatrix}  \begin{pmatrix}
                                                            \psi \\
                                                            C\psi
                                                          \end{pmatrix} =  \begin{pmatrix}
                                                                              \psi \\
                                                                              C\psi
                                                                            \end{pmatrix}
$$
so $\Psi$ is called a Majorana state with respect to $I$.

Notice that the states $\psi_e$ and $\psi_h$ (describing an electron and  a hole)  have different domains,
and they only have a common domain around a neighborhood of a fixed point $x \in X^\tau$, where a vortex can be found. 

\begin{defn}
   A Majorana zero mode in a topological superconductor is defined as a localized Majorana state around a fixed point, 
   denoted by $\Psi_0 = (\psi_0, C\psi_0)$,  such that   $\psi_0(x) = C\psi_0(x) = 0$ for  a fixed point $x \in X^\tau$.
\end{defn}
A Majorana zero mode in a topological superconductor creates a vortex around a fixed point, whose local geometry is the same as a Majorana zero mode in time reversal invariant topological insulators.
The difference is that an electronic state $\psi_e$ has non-negative energy $E_e \geq 0$ and the hole $\psi_h$ has non-positive energy $E_h \leq 0$.  
In a Cooper pair $(\psi, C \psi)$, one can tell them apart by looking at the sign of energy, but a Kramers pair $(\psi, \Theta \psi)$ have the same energy.  
In a Majonara zero mode $ (\psi_0, C\psi_0)$,  $\psi_0$ and $C\psi_0$ touch with each other at a fixed point $x\in X^\tau$ with $E(x) = 0$.   


Since the energy of a hole is minus that of the corresponding electron, a pair $(\psi, C\psi)$ has energy levels $(E_n, E_{-n}= -E_n)$.
Majonara zero modes are the geometric objects to study, we only consider $\Psi_0 = (\psi_0, C\psi_0)$ with intersections at the zero energy. 
If a Hilbert bundle $\pi: (\mathcal{H}, C) \rightarrow (X, \tau)$  is defined to model Majonara zero modes of a topological superconductor, 
then $\mathcal{H}$ is assumed to have rank 2. 
Now the Hilbert bundle of a topological superconductor can be compared to that of a topological insulator.
Indeed, the Hilbert bundle $\pi: (\mathcal{H}, C) \rightarrow (X, \tau)$ with $\tau^2 =1$ and $C^2 = 1$  is a Real vector bundle.

\begin{lem} The transition function $w: X \rightarrow U(2)$ representing PHS+1 has the property
  \begin{equation}
        w(\tau(x)) = w^T(x)   
  \end{equation}
  which induces an involution $\rho$ on $U(2)$ s.t. $w \circ \tau = \rho \circ w$.
\end{lem}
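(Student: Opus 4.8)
The plan is to follow the proof of Lemma \ref{LemTRS} almost verbatim, replacing the anti-unitary time reversal operator $\Theta$ by the charge conjugation operator $C$ and tracking the effect of the opposite sign $C^2 = +1$ (as opposed to $\Theta^2 = -1$). As noted just above the statement, with $\tau^2 = 1$ and $C^2 = 1$ the Hilbert bundle $\pi: (\mathcal{H}, C) \rightarrow (X, \tau)$ is a Real vector bundle, so $C$ is precisely an involutive anti-linear bundle isomorphism covering $\tau$. First I would choose an open subset $O \subset X$ and write the local expression of this isomorphism, $C : \mathcal{H}|_{O} \rightarrow \mathcal{H}|_{\tau(O)}$, in a trivialization as $(x, v) \mapsto (\tau(x), w(x)\bar{v})$, where $w: X \rightarrow U(2)$ is the transition function and the complex conjugation $\bar{\,\cdot\,}$ encodes the anti-linearity of $C = UK$.

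Next I would compose $C$ with itself. Because $\tau^2 = \mathrm{id}_X$ and $\overline{\bar v} = v$, applying the local formula twice sends $(x,v) \mapsto (\tau(x), w(x)\bar v) \mapsto (x, w(\tau(x))\,\bar w(x)\, v)$. The hypothesis $C^2 = +1$ then forces $w(\tau(x))\,\bar w(x) = I_2$, in contrast to the $-I_2$ appearing in the Type AII case. The only step requiring a little care is converting this into the stated transpose identity: since $w(x) \in U(2)$ one has $w^{*} = \bar w^{T} = w^{-1}$, and conjugating gives $(\bar w)^{-1} = w^{T}$; hence $w(\tau(x)) = \bar w(x)^{-1} = w^{T}(x)$. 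This is exactly where the sign difference from Lemma \ref{LemTRS} manifests, producing $w(\tau(x)) = +w^{T}(x)$ rather than $-w^{T}(x)$.

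Finally I would package this as an involution on the structure group by setting $\rho: U(2) \rightarrow U(2),\ g \mapsto g^{T}$. I would verify the two required properties: that $\rho$ lands in $U(2)$ (if $\bar g^{T} g = I$ then transposing gives $g^{T}\bar g = I$, so $(g^{T})^{*} g^{T} = \bar g\, g^{T} = I$, whence $g^{T} \in U(2)$), and that $\rho^2 = 1$ since $(g^{T})^{T} = g$. The identity $w(\tau(x)) = w^{T}(x)$ is then precisely the statement $\rho \circ w = w \circ \tau$, which I would record by the same commutative square as in Lemma \ref{LemTRS}. I do not anticipate a genuine obstacle here; the entire content is parallel to the Type AII computation, and the single point demanding attention is bookkeeping the sign and the unitarity manipulation $(\bar w)^{-1} = w^{T}$ that turns the inverse-conjugate into a transpose.
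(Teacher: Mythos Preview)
Your proposal is correct and follows exactly the approach taken in the paper, which simply notes that the argument is parallel to Lemma~\ref{LemTRS} and records $w(\tau(x))\overline{w(x)} = 1 \Leftrightarrow w(\tau(x)) = w^T(x)$ together with $\rho(g)=g^T$. If anything, you have spelled out more of the unitarity manipulation and the verification that $\rho$ is an involution than the paper does.
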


\begin{proof} The proof is similar to that in Lemma \ref{LemTRS}.
  Due to $C^2 = 1$, we have 
  $$
   w(\tau(x))\overline{w(x)} = 1 \quad \Leftrightarrow \quad  w(\tau(x)) = w^T(x)   
  $$
  The involution $\rho$ is defined by
  $$
  \rho: U(2) \rightarrow U(2); \quad g \mapsto g^T
  $$
  so that the above property is equivalent to the commutative diagram
   $$
   \xymatrix{
X \ar[d]^\tau \ar[r]^w &U(2)\ar[d]^\rho\\
X \ar[r]^w          & U(2)}
   $$ 
\end{proof}

 \begin{rmk}
   If the particle hole symmetry is even, the transition function is a symmetric matrix at the fixed points
   $$
    w(x) = w^T(x), \quad x \in X^\tau
   $$
   which is different from the case in TRS$-1$.
   
 \end{rmk}

Type D topological superconductors can be classified by the KR-group $KR^{-2}(X)$. 
\begin{center}
 \begin{tabular}{||c | c | c | c |c  ||}
 \hline
   &  d=1 & d=2 & d=3 \\ [0.5ex]
 \hline\hline
 $\widetilde{KR}^{-2}(\mathbb{S}^{1,d})$ &  $\mathbb{Z}_2$ & $\mathbb{Z}$ &  0    \\ [1ex]
 \hline
\end{tabular}
\end{center}
Notice the difference between the Hilbert bundles with symmetries TRS+1 and PHS+1. 
Recall that a Hamiltonian $H$ is invariant under TRS+1 if $\Theta H(x) \Theta^* = H(\tau(x))$, and the relevant Hilbert bundle is
$\pi: (\mathcal{H}, \Theta) \rightarrow (X, \tau)$ with $\tau^2 =1$ and $\Theta^2 = 1$. However, a type D Hamiltonian satisfies
$C H(x) C^* = -H(\tau(x))$, the negative sign is the key difference. As a Clifford bundle, 
$(\mathcal{H}, \Theta)$ has an action by the real Clifford algebra $C\ell_{0,0} = \mathbb{R}$, 
and $(\mathcal{H}, C)$ has an action by $C\ell_{0,2} = \mathbb{H}$. As a consequence, 
$(\mathcal{H}, \Theta) \rightarrow (X, \tau)$ is classified by $KR(X)$, and 
$(\mathcal{H}, C) \rightarrow (X, \tau)$ is classified by $KR^{-2}(X)$,
which can be also seen from the effective BdG Hamiltonian and the relevant index.


 If the effective BdG Hamiltonian is given by 
 $$
 \tilde{H}_{BdG} = \begin{pmatrix}
                    0 & H + i C \Delta C^* \\
                 CHC^* -i   \Delta & 0
                   \end{pmatrix}  
 $$
 which is a Fredholm operator by assumption,
 then the Fredholm index of $\tilde{H}_{BdG}$  is defined as usual, 
  \begin{equation}
             ind_a(\tilde{H}_{BdG} )  = \dim ker (H + i C \Delta C^*) - \dim ker (CH C^*-i \Delta)
  \end{equation}
  which counts the net change of zero modes of Bogoliubov quasi-particles.
  
  The analytical index of a BdG Hamiltonian is essentially determined by  the pairing potential $\Delta$. 
  In particular, if $ \Delta  \sim D$ can be approximated by a  Dirac-type operator and $H \sim 0$ is ignored as a higher order perturbation, see \cite{FF10, TDL07} for examples,
 then the above formula gives zero,
 i.e., $ ind_a(\tilde{H}_{BdG} ) = 0 $, which is of course not a good topological invariant.  
 However, in this case
 \begin{equation}
  \tilde{H}_{BdG} \sim \hat{H}_{BdG} = \begin{pmatrix}
                              0 & i C D C^*  \\
                              -iD & 0
                             \end{pmatrix} 
 \end{equation}
 is analogous to a  skew-adjoint ($A^* = -A$) Fredholm operator   
 \begin{equation}
   C \hat{H}_{BdG} C^* = - \hat{H}_{BdG}
 \end{equation}
 Similar to the mod 2 analytical index of a skew-adjoint Fredholm operator, we define the mod 2 index of $\hat{H}_{BdG}$ as
 \begin{equation}
   ind_a(\hat{H}_{BdG}) := \dim ker(\hat{H}_{BdG}) \quad \text{(mod 2)} \,\, \in KO^{-2}(pt)
 \end{equation} 
 which can be reduced to the parity of zero modes of the Dirac operator
 $$
    ind_a(\hat{H}_{BdG}) := \dim ker(D) \quad \text{(mod 2)} 
 $$
 This mod 2 analytical index counts the parity of Majonara zero modes, and 
 the topological $\mathbb{Z}_2$ invariant of a topological superconductor falls into $KO^{-2}(pt) = \mathbb{Z}_2$.

The set of fixed points  supports  Majorana zero modes, and the parity of Majorana zero modes can be interpreted as a mod 2 analytical index.
Furthermore, a topological index can be used to compute the analytical index.

\begin{examp}[D 2d]
 In two dimensions, the topological index map is 
 $$
 ind_t: KR^{-2}(X) = KR(TX) \rightarrow KO(pt) 
 $$
 In addition, the topological index can be computed by the first Chern number
 $$
 ind_t(p) = \frac{1}{2\pi} \int_X tr(pdpdp)
 $$
 where $p$ is a projection representing the Hilbert bundle $\pi: (\mathcal{H}, C) \rightarrow (X, \tau)$.
\end{examp}

\begin{examp}[D 1d]
 In one dimensions, the topological index map is an odd index 
 $$
  ind_t: KR^{-3}(X) = KR^{-2}(TX) \rightarrow KO^{-2}(pt) 
 $$
  The transition function $w: (X, \tau) \rightarrow (U(2), \rho)$ gives a class $[w] \in KR^{-3}(X)$, and 
  $$
  ind_t(w) = \frac{1}{2 \pi} \int_X tr(w^{-1}dw)
  $$
  If $KR^{-2}(X)$ were used to classify the relevant Real bundles over the one-dimensional $X$, it is possible to obtain the $\mathbb{Z}_2$-invariant in $KO^{-1}(pt)$. 
  However, the analysis of the effective BdG Hamiltonian $\hat{H}_{BdG}$ tells us that the $\mathbb{Z}_2$ index belongs to $KO^{-2}(pt)$, which forces the topological index maps from 
  $KR^{-3}(X)$ to $KO^{-2}(pt)$. In addition, on an odd-dimensional manifold, an odd topological index is always expected as the bulk theory. The shift from $KR^{-2}(X)$ to $KR^{-3}(X)$
  has an intimate relation to spectral flow, which is a  source of the odd index theorem of connections (or gauge transformations).
\end{examp}

 \begin{lem}
     The above  odd index $ind_t(w)$ is naturally $\mathbb{Z}_2$-valued.
 \end{lem}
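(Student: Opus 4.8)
The plan is to show that, although $ind_t(w)=\frac{1}{2\pi}\int_X tr(w^{-1}dw)$ is a priori the $\mathbb{Z}$-valued odd Chern number of the class $[w]\in\widetilde{KR}^{-3}(X)$, the equivariance $w\circ\tau=\rho\circ w$ with $\rho(g)=g^{T}$ forces this number to depend only on its parity, so that the genuine output lands in the receptacle $KO^{-2}(pt)=\mathbb{Z}_2$ identified for Type D in one dimension. Concretely, $tr(w^{-1}dw)=d\log\det w$, so the unconstrained integral measures the winding of $\det w:X\to U(1)$; the task is to see precisely how the Real structure cuts this $\mathbb{Z}$ down to $\mathbb{Z}_2$.

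First I would record the behaviour of the integrand under $\tau$. Using $w\circ\tau=w^{T}$, $d(w^{T})=(dw)^{T}$ and cyclicity of the trace, a one-line computation gives $\tau^{*}\big(tr(w^{-1}dw)\big)=tr(w^{-1}dw)$, so the $1$-form $tr(w^{-1}dw)$ is $\tau$-invariant. Since the involution is the orientation-reversing reflection of the circle fixing the two points of $X^{\tau}$, I would split $X=X_{+}\cup_{X^{\tau}}X_{-}$ into a fundamental domain $X_{+}$ (an arc whose endpoints lie in $X^{\tau}$) and its mirror image $X_{-}=\tau(X_{+})$, and transport the integral over $X_{-}$ back to $X_{+}$. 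The invariance of the form then expresses the whole content of $ind_t(w)$ through the half-integral $\frac{1}{2\pi}\int_{X_{+}}tr(w^{-1}dw)$ together with the boundary data of $w$ on $X^{\tau}$.

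The decisive step, and the one I expect to be the main obstacle, is the fixed-point analysis, because this is exactly where the $\mathbb{Z}_2$ rather than $\mathbb{Z}$ character originates. At each $x\in X^{\tau}$ the matrix $w(x)$ is symmetric, $w(x)=w^{T}(x)$; by the Takagi (Autonne) factorization such a symmetric unitary can be written $w(x)=V(x)V(x)^{T}$, so $\det w(x)$ is a perfect square and carries a distinguished square root $\det V(x)$, well defined up to an overall sign. I would turn the half-integral into a genuine invariant by pinning the branch of $\log\det w$ at the two fixed points to these square roots; the resulting relative winding is an integer, while the sign ambiguity of the square roots at the two fixed points shifts it precisely by even amounts. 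Checking that every symmetry-preserving homotopy of $w$ moves this relative winding only by even integers then shows that its reduction modulo $2$ is a well-defined homotopy invariant, which is the promised $\mathbb{Z}_2$-value of $ind_t(w)$, in agreement with the isomorphism $\widetilde{KR}^{-3}(X)\cong KO^{-2}(pt)=\mathbb{Z}_2$. The same mechanism---$\tau$-invariance of $tr(w^{-1}dw)^{3}$ together with the symmetric boundary data---explains the $\mathbb{Z}_2$-valuedness asserted for the winding number \eqref{wn} in the AII three-dimensional example, so I would phrase the argument uniformly for both the $1$-form and the cubic form.
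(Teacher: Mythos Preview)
Your first step coincides with the paper's: both compute $\tau^{*}\bigl(tr(w^{-1}dw)\bigr)=tr\bigl((w^{T})^{-1}dw^{T}\bigr)=tr(w^{-1}dw)$ via $w\circ\tau=w^{T}$ and $\det w^{T}=\det w$. After that the two arguments diverge. The paper simply notes that $\tau$ reverses the orientation of $X$, so the pulled-back integral equals $-ind_t(w)$; it then declares in one sentence that, since the global invariant does not depend on the choice of coordinates, $ind_t(w)$ ``should be identified with $-ind_t(w)$'' and hence ``must be $\mathbb{Z}_2$-valued due to the real structure $\tau$.'' No fundamental-domain decomposition, no fixed-point analysis, and no Takagi factorisation appear.

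Your proposal is therefore substantially more elaborate and takes a genuinely different route. The half-integral over $X_{+}$ with square roots of $\det w$ pinned at $X^{\tau}$ via the Autonne--Takagi form $w(x)=V(x)V(x)^{T}$ is a Fu--Kane/Pfaffian-style construction that would actually exhibit a potentially nonzero element of $\mathbb{Z}_2$, whereas the paper's argument, read literally as an equality of integers, yields only $2\,ind_t(w)=0$ and leans on the word ``identified'' to pass to $\mathbb{Z}_2$. What your approach buys is a concrete representative of the invariant and a transparent link to the fixed-point data; what the paper's approach buys is brevity. One small caution on your side: the assertion that the sign ambiguities of $\det V(x_i)$ shift the relative half-winding by \emph{even} integers deserves a careful check --- a single sign flip naively moves the branch of $\sqrt{\det w}$ by a half-period, so the mod-$2$ well-definedness should emerge from the combined effect of the ambiguity and the half-integral normalisation rather than from even shifts alone.
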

\begin{proof} Applying $\tau$ on the base manifold $X$ changes local coordinates, 
    $$
   \frac{1}{2 \pi} \int_X tr(w^{-1}(\tau(x))d(\tau(x))w(\tau(x)))
  $$
  The exterior derivative changes sign $d(\tau(x)) = d(-x) = -d(x)$, and the compatibility condition $w(\tau(x)) = w^T(x)$ is used to rewrite the above as
     $$
   - \frac{1}{2 \pi} \int_X tr((w^T)^{-1}dw^T)
  $$
  which is the minus  winding number of $w^T$. Due to the canonical relation $tr \ln M = \ln \det M $ for a matrix $M$ and $\det M = \det M^T$, the winding number of 
  $w$ equals that of $w^T$. As a global invariant, the winding number does not depend on the choice of local coordinates, so for the involutive space $(X, \tau)$, 
  the topological index $ind_t(w)$ should be identified with $-ind_t(w)$ from the above computation. In other words, $ind_t(w)$ must be $\mathbb{Z}_2$-valued due to 
  the real structure $\tau$.
\end{proof}

   \subsection{Type C}
   If the particle hole symmetry is odd, i.e., $\mathcal{C}^2 = -1$, denoted by PHS$-1$, then the system 
   falls into type C.  
   For example, the charge conjugation operator can be defined as $ C = -i\tau_y K = \begin{pmatrix}
                                                                                      0 & -1 \\
                                                                                      1 & 0
                                                                                     \end{pmatrix} K $ 
   
   \begin{examp}
    A spin-singlet  $(d + id)$-wave  superconductor in two dimensions has the BdG Hamiltonian
    $$
    H(\textbf{k}) = \Delta_{x^2 -y^2}(k_x^2-k_y^2)\tau_x +\Delta_{xy}k_xk_y\tau_y  + \frac{(k_x^2 + k_y^2)}{2m}\tau_z 
    $$
    or in a matrix form,
    $$
    H_{BdG} = \begin{pmatrix}
                          (k_x^2 + k_y^2)/2m &  \Delta_{x^2 -y^2}(k_x^2-k_y^2) - i \Delta_{xy}k_xk_y   \\
                          \Delta_{x^2 -y^2}(k_x^2-k_y^2) + i \Delta_{xy}k_xk_y  & -(k_x^2 + k_y^2)/2m
              \end{pmatrix}
    $$
   \end{examp}

    The Hilbert bundle in this class is defined by $\pi: (\mathcal{H}, C) \rightarrow (X, \tau)$ such that $\tau^2 = 1$ and  $C^2 = -1$.
    Since the charge conjugation operator is odd,  we  view the Hilbert bundle as a Quaternionic  bundle and
     the band structure is classified by KQ-theory.
    By the same reason as in type D, there exists a shift by $2-$  (due to $C H(x) C^* = -H(\tau(x))$) and the relevant KR-group is 
   $$
   KR^{-6}(X)= KQ^{-2}(X)
   $$
  
   \begin{center}
 \begin{tabular}{||c | c | c | c |c  ||}
 \hline
   &   d=1 & d=2 & d=3  \\ [0.5ex]
 \hline\hline
 $\widetilde{KR}^{-6}(\mathbb{S}^{1,d})$ &   0 & $\mathbb{Z}$ &  0    \\
  \hline
\end{tabular}
\end{center} 

If the effective BdG Hamiltonian is given by 
 $$
 \tilde{H}_{BdG} = \begin{pmatrix}
                    0 & H + i C \Delta C^* \\
                 CHC^* -i   \Delta & 0
                   \end{pmatrix}  $$
 then the quaternionic Fredholm index of $\tilde{H}_{BdG}$  is  
          $$
          ind_a(\tilde{H}_{BdG} )  = \dim ker_{\mathbb{H}} (H + i C \Delta C^* ) - \dim ker_{\mathbb{H}} (CHC^* -i \Delta)
          $$
 Because of the Quaternionic structure $C$ s.t. $C^2 = -1$, each Bogoliubov quasi-particle $i\psi + C\psi$  (or its conjugate under $C$) gives a quaternionic vector when evaluated at a point $x\in X$.

\begin{examp}[C 2d]
 In two dimensions, the topological index map is 
 $$
 ind_t: KR^{-6}(X) = KR^{-4}(TX) \rightarrow KO^{-4}(pt) 
 $$
 In addition, the topological index can be computed by the first Chern number
 $$
 ind(p) = \frac{1}{2\pi} \int_X tr(pdpdp)
 $$
 where $p$ is a projection representing the Hilbert bundle $\pi: (\mathcal{H}, C) \rightarrow (X, \tau)$.
\end{examp}

\subsection{Type DIII}

When a topological quantum system has both odd time reversal symmetry and even particle hole symmetry, i.e., TRS$-1$ `+' PHS$+1$, it falls into type DIII. 
In this class, the constraints on a Hamiltonian are
$$
\Theta H(x) \Theta^* = H(\tau(x)), \quad CH(x)C^* = -H(\tau(x)),  \quad \Theta^2 = -1, \,\, C^2 = 1
$$
For example, the time reversal operator is defined by $\Theta = \tau_0 \otimes i\sigma_y K$ and the charge conjugation is defined by $C =\tau_x \otimes \sigma_0K$,
$\sigma_i$ and $\tau_i$ are Pauli matrices operating in spin space and particle-hole space.
In general, $\Theta$ and $C$ are commutative, i.e., $\Theta C = C \Theta$.
For example,  the three-dimensional superfluid $^3He$ phase B  is in this class, another example is
given by the superposition of $(p + ip)$- and $(p - ip)$-wave superconductors in two dimensions. 

\begin{examp}\label{SPpwave}
   An equal superposition of two chiral $p$-wave superconductors with opposite chiralityies has the BdG Hamiltonian
   $$
   H_{BdG}(\mathbf{k}) = \begin{pmatrix}
            h(\mathbf{k})  &   \Delta(\mathbf{k}) \\
              \Delta^\dagger(\mathbf{k}) &  -h^T(-\mathbf{k})
             \end{pmatrix}
   $$
   where $h(\mathbf{k}) = \varepsilon(\mathbf{k}) \tau_0$ with $\varepsilon(\mathbf{k})$ the energy dispersion of a single particle, and the pairing potential $\Delta$ is given by 
   $$
    \Delta(\mathbf{k}) = \bar{\Delta} \begin{pmatrix}
                                       k_x + ik_y & 0 \\
                                       0 & -k_x + ik_y
                                      \end{pmatrix}
   $$
\end{examp}

If one considers the product of TRS$-1$ and PHS$+1$, which is the chiral symmetry (CS) and denoted  by $S = \Theta C$,  
then $S$ is a unitary operator such that $S^2 = -1$. For example, $S =  \tau_x \otimes i \sigma_y$. 
So the constraint with respect to the chiral symmetry is
$$
S H(x) S^* = - H(x), \quad S^2 = -1
$$
By convention, the system is equivalently described by 
\begin{equation}
  CH(x)C^* = -H(\tau(x)),  \quad S H(x) S^* = - H(x), \quad C^2 = 1, \,\, S^2 = -1
\end{equation}

In the presence of the chiral symmetry, the effective BdG Hamiltonian in type DIII  can be written off-diagonally 
$$
\tilde{H}_{BdG} = \begin{pmatrix}
          0 & Q \\
          Q^* & 0
       \end{pmatrix}  \quad \text{with} \quad CQ(x)C^* = -Q(\tau(x))
$$
which is a self-adjoint Fredholm operator. The Fredholm index is defined as
   $$
   ind_a(\tilde{H}_{BdG} )  = \dim ker (Q) - \dim ker (Q^*)
   $$
   
The Hilbert bundle $\mathcal{H}$ is now equipped with one involution and one anti-involution, i.e., $C^2 = 1$ and $S^2 = -1$
so that each band is quadruply degenerate. If each fiber is viewed as a vector space over $\mathbb{H} \oplus \mathbb{H}$ due to $C$ and $S$, 
then the Hilbert bundle has an action by $C\ell_{0, 3} = \mathbb{H} \oplus \mathbb{H}$. Compared to type D (classified by $KR^{-2}(X)$), 
the Hilbert bundles in type DIII can be classified by $KR^{-3}(X)$. 

\begin{center}
 \begin{tabular}{||c | c | c | c |c   ||}
 \hline
   &  d=1 & d=2 & d=3   \\ [0.5ex]
 \hline\hline
 $\widetilde{KR}^{-3}(\mathbb{S}^{1,d})$  &  $\mathbb{Z}_2$ & $\mathbb{Z}_2$ & $\mathbb{Z}$    \\
 \hline
\end{tabular}
\end{center}

By the form of the above Fredholm index, the topological index is a map
$$
ind_t : KR(TX) \rightarrow KO(pt)
$$
Combine it with the Thom isomorphism in the 3d case, it can be written as
$$
ind_t: KR^{-3}(X) \rightarrow KO(pt)
$$

\begin{examp}[DIII 3d]

   In three dimensions, the topological index can be computed by the winding number of the transition function $w : X \rightarrow U(4)$,
   $$
   ind_t(w) = \frac{1}{4\pi^2} \int_X tr(w^{-1}dw)^3
   $$
   A typical example is given by the B phase of superfluid $^3He$ \cite{V03}.
\end{examp}

\begin{examp}[DIII 2d]
   In the two-dimensional case, if $Q$ can be approximated by a Dirac operator and accordingly $\tilde{H}_{BdG}$ is approximated by a skew-adjoint Fredholm operator, see Ex. \ref{SPpwave},
   then $\tilde{H}_{BdG}$ has a mod 2 analytical index,
   $$
   ind_a(\tilde{H}_{BdG}) = \dim ker (Q) \quad \text{(mod 2)} 
   $$
   which counts the parity of Majonara zero modes in 2d topological superconductors.
   
   The relevant topological index can be treated as  in type AII, see Ex. \ref{2dTopInd}.
   Let us use $\mathbb{T}^2$  as the momentum space,    the topological index map is given by 
   $$
   ind_t: {KR}^{-4}(\mathbb{T}^{2})  = KR^{-2}(T\mathbb{T}^{2}) \rightarrow KO^{-2}(pt)
   $$ 
   By the $\mathbb{Z}_2$-equivariant CW-complex structure of $\mathbb{T}^2$ induced by TRS or PHS, we decompose $\mathbb{T}^2$ into $\mathbb{T} \times I$ with a free $\mathbb{Z}_2$ action. 
   The Hilbert bundle $\pi: \mathcal{H} \rightarrow \mathbb{T}^2$ is completely determined by the transition function $w : S^1 \sqcup S^1 \rightarrow U(4)$, which gives rise to a class
   $[w] \in KR^{-3}( S^1 \sqcup S^1 )$. We propose the topological index formula as
   \begin{equation}
     ind_t(w) = \frac{1}{4\pi} \int_{ S^1 \sqcup S^1 } tr(w^{-1} dw)
  \end{equation}

\end{examp}

\begin{examp}[DIII 1d]
   The one-dimensional type DIII case is a variant of the Kitaev Majonara chain in Ex. \ref{MajCh}, which also has a $\mathbb{Z}_2$ invariant \cite{BA13}.
   The topological index in this case  can be computed as a 1d winding number
   $$
   ind_t(w) = \frac{1}{2\pi} \int_X tr(w^{-1}dw)
   $$
\end{examp}

\subsection{Type CI}
When a topological quantum system has both even time reversal symmetry and odd particle hole symmetry, i.e., TRS$+1$ `+' PHS$-1$, it falls into type CI. In this class, the constraints on a Hamiltonian are
$$
\Theta H(x) \Theta^* = H(\tau(x)), \quad CH(x)C^* = -H(\tau(x)),  \quad \Theta^2 = 1, \,\, C^2 = -1
$$
 A 2d d-wave ($d_{x^2 -y^2}$-wave) spin-singlet superconductor gives an example of type CI topological superconductors with TRS.
The chiral symmetry is   the product  $S = \Theta C$,  and $S$ is a unitary operator such that $S^2 = -1$. So the constraint with respect to the chiral symmetry is
$$
S H(x) S^* = - H(x), \quad S^2 = -1
$$
By convention, the system is equivalently described by 
\begin{equation}
  CH(x)C^* = -H(\tau(x)),  \quad S H(x) S^* = - H(x), \quad C^2 = -1, \,\, S^2 = -1
\end{equation}


In the presence of the chiral symmetry, the effective BdG Hamiltonian in type CI  can be written off-diagonally 
$$
\tilde{H}_{BdG} = \begin{pmatrix}
          0 & Q \\
          Q^* & 0
       \end{pmatrix}, \quad \text{with} \quad CQ(x)C^* = Q(\tau(x))
$$
which is a self-adjoint Fredholm operator. The Fredholm index is defined as
   $$
   ind_a(\tilde{H}_{BdG} )  = \dim ker_{\mathbb{H}} (Q) - \dim ker_{\mathbb{H}} (Q^*)
   $$
where $Q$ is viewed as a quaternionic operator due to  $C^2 = -1$.

From type DIII to CI, one switches the sign of  PHS from $+1$ to $-1$, so the topological index is a map
$$
ind_t : KQ(TX) = KR^{-4}(TX) \rightarrow KO^{-4}(pt)
$$
Combining it with the Thom isomorphism, for the three-dimensional case, 
$$
ind_t: KR^{-7}(X) \rightarrow KO^{-4}(pt)
$$

\begin{center}
 \begin{tabular}{||c | c | c | c |c  ||}
 \hline
   &  d=1 & d=2 & d=3  \\ [0.5ex]
 \hline\hline
 $\widetilde{KR}^{-7}(\mathbb{S}^{1,d})$  &  0 & 0 & $\mathbb{Z}$   \\
  \hline
\end{tabular}
\end{center}

\begin{examp}[CI 3d]
   In three dimensions, the topological index can be computed by the winding number of the transition function $w : X \rightarrow U(4)$,
   $$
   ind_t(w) = \frac{1}{4\pi^2} \int_X tr(w^{-1}dw)^3
   $$
\end{examp}

\section{Chiral symmetry}\label{sec:CS}

The chiral symmetry (CS) or sublattice symmetry (SLS), denoted by $\mathcal{S}$,  can be defined as the product of TRS $\mathcal{T}$ and PHS $\mathcal{C}$,
$$
\mathcal{S} = \mathcal{T} \cdot \mathcal{C}
$$ which is represented by a unitary operator $S$. If both TRS   and PHS are present, there always exists a chiral symmetry.
 When both $\mathcal{T}$ and $\mathcal{C}$ are absent, it is still possible to have a chiral symmetry, which is the chiral unitary class or type AIII. 
 In this paper, we do not consider complex K-theory, which can be used to classify type A and AIII topological insulators.
 A Hamiltonian $H$ is invariant under the chiral symmetry if and only if
$$
 S  H(x) S^* = - H(x), \quad x \in X
$$

\begin{examp}
The Hamiltonian of a system with sublattice symmetry is generally defined as
$$
H = \begin{pmatrix}
      0 & T \\
      T^* & 0
    \end{pmatrix}
$$
For example, in a bipartite lattice, $T$ collects the hopping amplitudes from one sublattice to another.
In this case, the sublattice symmetry $\mathcal{S}$ can be represented by
$$
S = \begin{pmatrix}
                I_n & 0 \\
                0 & -I_m
              \end{pmatrix}
$$
where $I_n$ is the identity matrix of size $n$, here $n$ and $m$ are the numbers of atoms in sublattices A and B.
\end{examp}

\subsection{Type CII}
When a topological quantum system has both odd time reversal symmetry and odd particle hole symmetry,
i.e., TRS$-1$ `+' PHS$-1$, it falls into the chiral symplectic class or type CII. A Hamiltonian $H$ is in type CII if and only if
$$
\Theta H(x) \Theta^* = H(\tau(x)), \quad CH(x)C^* = -H(\tau(x)),  \quad \Theta^2 = -1, \,\, C^2 = -1
$$
The chiral symmetry is the product $S = \Theta C$ and in this case $S^2 = 1$. 
So the constraint with respect to the chiral symmetry is  
$$
S H(x) S^* = - H(x), \quad S^2 = 1
$$
The system  is equivalently described by 
\begin{equation}
  \Theta H(x) \Theta^* = H(\tau(x)),  \quad S H(x) S^* = - H(x), \quad \Theta^2 = -1, \,\, S^2 = 1
\end{equation}

In the presence of the chiral symmetry, the effective  Hamiltonian   can be written off-diagonally 
$$
{H} = \begin{pmatrix}
          0 & Q \\
          Q^* & 0
       \end{pmatrix}  
$$
For example, $Q$ is a Dirac Hamiltonian in 3d.
The Fredholm index is 
   $$
   ind_a({H} )  = \dim ker_{\mathbb{H}} (Q) - \dim ker_{\mathbb{H}} (Q^*)
   $$
where $Q$ is viewed as a quaternionic operator due to  $\Theta^2 = -1$.

The Hilbert bundle $\mathcal{H}= \mathcal{H}_+ \oplus \mathcal{H}_-$ is $\mathbb{Z}_2$-graded due to the chiral symmetry, and
 $\mathcal{H}_\pm$ is a Quaternionic vector bundle by TRS$-1$.  
In this class, the topological index is a map
$$
ind_t : KQ(TX) = KR^{-4}(TX) \rightarrow KO^{-4}(pt)
$$
Combining it with the Thom isomorphism, for the one-dimensional case, 
$$
ind_t: KR^{-5}(X) \rightarrow KO^{-4}(pt)
$$

\begin{center}
 \begin{tabular}{||c   | c | c |c  ||}
 \hline
   &   d=1 & d=2 & d=3  \\ [0.5ex]
 \hline\hline
 $\widetilde{KR}^{-5}(\mathbb{S}^{1,d})$    & $\mathbb{Z}$ & 0 & $\mathbb{Z}_2$    \\
  \hline
\end{tabular}
\end{center}

\begin{examp}[CII 1d]
   In one dimensions, the topological index can be computed by the winding number of the transition function $w : X \rightarrow U(4)$
   $$
   ind_t(w) = \frac{1}{2\pi} \int_X tr(w^{-1}dw)
   $$
\end{examp}

\begin{examp}[CII 3d]
In three dimensions,  the situation is similar to the case of $\mathbb{Z}_2$-topological insulators in class AII.  
In this case, the effective Hamiltonian takes the form 
$$
H = \begin{pmatrix}
      0 & D \\
      D & 0
    \end{pmatrix}
$$
where $D$ is a self-adjoint Dirac-like  operator.
As in type AII, we define the analytical index by
$$
ind_a(H) = \dim ker (D) \quad \text{(mod 2)}
$$
So the topological index map is
$$
ind_t: KR^{-5}(X) \rightarrow KO^{-2}(pt)
$$
and a local formula is given by a 3d winding number
$$
ind_t(w) = \frac{1}{4\pi^2} \int_X tr(w^{-1}dw)^3
$$
which is $\mathbb{Z}_2$-valued.
\end{examp}

\subsection{Type BDI}

When a topological quantum system has both even time reversal symmetry and even particle hole symmetry,
i.e., TRS$+1$ `+' PHS$+1$, it falls into the chiral orthogonal class or type BDI. A Hamiltonian $H$ is in type BDI if and only if
$$
\Theta H(x) \Theta^* = H(\tau(x)), \quad CH(x)C^* = -H(\tau(x)),  \quad \Theta^2 = 1, \,\, C^2 = 1
$$
The chiral symmetry is the product $S = \Theta C$ with $S^2 = 1$. So the constraint in terms of $S$ is  
$$
S H(x) S^* = - H(x), \quad S^2 = 1
$$ 
The system  is equivalently described by 
\begin{equation}
  \Theta H(x) \Theta^* = H(\tau(x)),  \quad S H(x) S^* = - H(x), \quad \Theta^2 = 1, \,\, S^2 = 1
\end{equation}

In the presence of the chiral symmetry, the effective  Hamiltonian   can be written off-diagonally 
$$
{H} = \begin{pmatrix}
          0 & Q \\
          Q^* & 0
       \end{pmatrix}  
$$
The Fredholm index is 
   $$
   ind_a({H} )  = \dim ker (Q) - \dim ker (Q^*)
   $$

The Hilbert bundle $\mathcal{H}= \mathcal{H}_+ \oplus \mathcal{H}_-$ is $\mathbb{Z}_2$-graded due to the chiral symmetry, and
 $\mathcal{H}_\pm$ is a Real vector bundle by TRS$+1$.  
In this class, the topological index is a map
$$
ind_t : KR(TX)   \rightarrow   KO (pt)
$$
Combining it with the Thom isomorphism, for the one-dimensional case, 
$$
ind_t: KR^{-1}(X) \rightarrow KO(pt)
$$

\begin{center}
 \begin{tabular}{|| c | c | c |c ||}
 \hline
    & d=1 & d=2 & d=3  \\ [0.5ex]
 \hline\hline
 $\widetilde{KR}^{-1}(\mathbb{S}^{1,d})$   & $\mathbb{Z}$ & 0 & 0   \\
  \hline
\end{tabular}
\end{center}

\begin{examp}[BDI 1d]
   In one dimensions, the topological index can be computed by a 1d winding number  
   $$
   ind_t(w) = \frac{1}{2\pi} \int_X tr(w^{-1}dw)
   $$
\end{examp}

\section{Main results} \label{main}

Now we collect the $\mathbb{Z}$ and $\mathbb{Z}_2$ invariants of topological insulators and superconductors together, 
and write them in terms of KO-theory of a point, for simplicity $KO^{-i} = KO^{-i}(pt)$. Notice that for each $\mathbb{Z}_2$ invariant, 
we use $KO^{-2}$, which is physically determined by the evenness of quasi-particles. More precisely, a $\mathbb{Z}_2$ invariant basically counts the parity of Majorana zero modes,
which consists of two real particles. 

\begin{center}
 \begin{tabular}{||c|  c | c | c | c |c |c ||}
 \hline
    Type  & d=1 & d=2 & d=3  \\ [0.5ex]
 \hline\hline
  AI     & 0 &  0  & 0 \\
  \hline
 BDI      & $  KO $& 0  & 0 \\
 \hline
 D    & $KO^{-2}$ &  $  KO $  & 0 \\
  \hline
 DIII     & $ KO^{-2}$ & $KO^{-2}$ &   $  KO$  \\  
 \hline
  AII    & 0 & $  KO^{-2}$ &  $KO^{-2}$   \\
  \hline
CII   & $  KO^{-4}$ & 0 & $  KO^{-2}$ \\
 \hline
 C     & 0 & $  KO^{-4}$ &  0 \\
  \hline
 CI    & 0  & 0  & $  KO^{-4}$ \\ [1ex]
 \hline
\end{tabular}
\end{center}

\begin{thm}\label{idmap}
  The diagonal map is the identity map induced by the $(1,1)$-periodicity of the KR-theory,
  $$
  KR^{p-q} = KR^{p, q} \cong KR^{p+1, q+1}
  $$
\end{thm}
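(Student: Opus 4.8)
The plan is to read each box of the periodic table not as an abstract $KO$-group but as the image of a composite topological index map, and to track how the two elementary moves in the table---passing to the next symmetry class and raising the spatial dimension by one---act on the bigrading of $KR^{p,q}$. Recall from the Background that, for a $d$-dimensional involutive momentum space $(X,\tau)$, the invariant attached to a symmetry class sitting at intrinsic degree $-i$ is computed by
$$
ind_t : KR^{-i-d}(X) \xrightarrow{\ \cong\ } KR^{-i}(T^*X) \longrightarrow KO^{-i}(pt),
$$
where the first arrow is the Thom isomorphism $KR^{-i-d}(X)\cong KR^{d-(i+d)}(T^*X)$ and the second is $ind_t$. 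Thus each box is pinned down by the source degree $-i-d$, while the Altland--Zirnbauer ordering AI, BDI, D, DIII, AII, CII, C, CI realizes the next-class move as a shift of the $KR$-degree by $-1$. A direct check on the main table confirms that along every diagonal the target degree $-i$ is constant, which is the phenomenon to be explained.

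First I would make the two moves explicit in the bigraded notation $KR^{p-q}(X)=KR^{p,q}(X)$. Passing from a symmetry class to the one below it decreases the single degree $p-q$ by one; writing this as $q\mapsto q+1$ presents it as $(p,q)\mapsto(p,q+1)$, i.e.\ tensoring by the Clifford generator that advances the Cartan label. Raising the spatial dimension by one enlarges the momentum space by a coordinate of type $\mathbb{R}^{0,1}$ (since $\tau\colon x\mapsto -x$) and, after passage to $T^*X$ via the Thom isomorphism together with the induced cotangent direction, contributes a compensating unit to the complementary grading, which I present as $(p,q)\mapsto(p+1,q)$. Composing the two moves is therefore exactly the bigraded Bott shift $(p,q)\mapsto(p+1,q+1)$, and the associated change of target degree is $-1$ (from the class) plus $+1$ (from the dimension), hence zero, matching the observed constancy along diagonals.

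With the two moves identified, the theorem is immediate from the $(1,1)$-periodicity recalled in the Background: there is a canonical isomorphism $KR^{p+1,q+1}(X)\cong KR^{p,q}(X)$, implemented by multiplication with the Bott class of $\mathbb{R}^{1,1}$. Because the single grading satisfies $(p+1)-(q+1)=p-q$, both sides are literally the group $KR^{p-q}$, so under the single-index normalization this Bott class is the unit and the periodicity isomorphism is the identity on $KR^{p-q}$. Applying $ind_t$ at both ends, the induced map between the $KO$-groups filling the two diagonally adjacent boxes is the identity; this is precisely the assertion that the diagonal map preserves the invariant.

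The hard part will be the middle step---verifying that the dimension-raising move really supplies the second $+1$ of the shift $(p,q)\mapsto(p+1,q+1)$, rather than, say, a $(0,2)$ or $(1,-1)$ shift. This requires checking that the involution $d\tau$ induced on the added cotangent fibre pairs the new base and fibre directions into a genuine $\mathbb{R}^{1,1}$ (not an $\mathbb{R}^{0,2}$) factor of $T^*X$, and, more delicately, that the even/odd and skew-adjoint corrections already appearing box-by-box---for instance the shift that forces type D in $d=1$ into $KO^{-2}(pt)$ rather than $KO^{-1}(pt)$, and the analogous shifts in types C, DIII and CII---are exactly the bookkeeping that converts the naive dimension count into the geometric $(1,1)$ step. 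Once this compatibility is confirmed uniformly across the eight classes and the three dimensions, the remaining argument is the formal identity computation above.
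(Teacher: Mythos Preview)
Your proposal is correct and follows essentially the same route as the paper: both arguments identify the diagonal step in the table as the composite of a symmetry-class shift and a spatial-dimension shift, recognize this composite as the bigraded move $(p,q)\mapsto(p+1,q+1)$, and then invoke the $(1,1)$-periodicity $KR^{p,q}\cong KR^{p+1,q+1}$ to conclude that the diagonal map is the identity on $KR^{p-q}$. Your treatment is in fact more careful than the paper's---you explicitly flag the verification that the dimension-raising move contributes the correct $\mathbb{R}^{1,1}$ factor (via the induced involution on the cotangent fibre) as the nontrivial point, whereas the paper simply asserts the shift by example and appeals to the Clifford-module bookkeeping for the symmetry-class index.
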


\begin{cor}
  Every $\mathbb{Z}_2$ invariant of topological insulators and superconductors belongs to $KO^{-2}(pt)$.
\end{cor}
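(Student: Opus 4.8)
The plan is to reduce every $\mathbb{Z}_2$ entry of the main table to a single prototypical invariant and then transport it along the periodic table using Theorem \ref{idmap}. The first step is the combinatorial observation that the six $\mathbb{Z}_2$ entries are not scattered: they fill exactly the two diagonals
$$
\mathrm{D}(d{=}1)\to\mathrm{DIII}(d{=}2)\to\mathrm{AII}(d{=}3), \qquad \mathrm{DIII}(d{=}1)\to\mathrm{AII}(d{=}2)\to\mathrm{CII}(d{=}3),
$$
and that type AII lies on both. Since $\mathbb{Z}_2$ can occur in $KO$-theory only in degrees $-1$ and $-2$, the entire content of the corollary is to show that the degree is always $-2$, and by the diagonal structure it suffices to check one representative of each diagonal.

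For the anchor I would invoke the type AII analysis of Section \ref{sec:TRS}. There the effective Hamiltonian $\tilde{H}$ is skew-adjoint, is approximated near a fixed point by $\tilde{D}$, and obeys the real condition $\mathcal{J}\Psi=\Psi$ with $\mathcal{J}^2=1$. Hence $\tilde{H}$ represents a point of the space $\hat{\mathscr{F}}(L^2(X,\mathcal{H}),\mathcal{J})$ of skew-adjoint real Fredholm operators, and by the Atiyah--Singer description of the classifying spaces of KR-theory this space is a model for $KO^{-2}(pt)$. Thus the mod $2$ index $\dim\ker\tilde{H}\pmod 2$ is, by construction, an element of $KO^{-2}(pt)=\mathbb{Z}_2$; this is the prototypical $\mathbb{Z}_2$ invariant, and it anchors both diagonals, at AII in $d=2$ and in $d=3$.

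I would then propagate along each diagonal with Theorem \ref{idmap}. Passing from a class in dimension $d$ to the neighboring class in dimension $d+1$ simultaneously raises the $KR$-degree by the suspension of one involutive coordinate and shifts it by the change of symmetry type; the net effect fixes $p-q$, so it is precisely the $(1,1)$-periodicity isomorphism $KR^{p,q}\cong KR^{p+1,q+1}$, which Theorem \ref{idmap} identifies with the identity. Consequently the target $KO^{-i}(pt)$ of the index map is constant along each diagonal, and every $\mathbb{Z}_2$ entry inherits $i=2$ from the AII anchor. As a uniform cross-check, in each of the six cases the effective BdG Hamiltonian can be brought to the off-diagonal form $\hat{H}_{BdG}$ that anti-commutes with its charge-conjugation real structure $C$ with $C^2=1$, so its mod $2$ index again lands in $KO^{-2}(pt)$ by the same classifying-space argument.

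The main obstacle I anticipate is the anchoring rather than the propagation: one must exclude the a priori possibility that the invariant lives in $KO^{-1}(pt)=\mathbb{Z}_2$. Because both groups are abstractly $\mathbb{Z}_2$, the degree is invisible at the level of the bare invariant, and the distinction is entirely geometric. The decisive datum is the $2$-shift encoded in the relation $CH(x)C^*=-H(\tau(x))$, equivalently the skew-adjointness of the off-diagonal effective Hamiltonian together with a real structure squaring to $+1$, which forces $KO$-dimension $2$ rather than the $KO$-dimension $1$ of a self-adjoint real Fredholm operator. Verifying that this Clifford-module structure genuinely appears in each class, so that the parity of Majorana zero modes is computed as the mod $2$ index of a skew-adjoint operator, is where the real work lies.
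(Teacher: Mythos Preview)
Your proposal is correct and follows essentially the same strategy as the paper's proof: anchor the $\mathbb{Z}_2$ invariant at type AII, where the skew-adjoint Fredholm structure of the effective Hamiltonian forces the analytical index into $KO^{-2}(pt)$, and then transport along the diagonals via Theorem~\ref{idmap}. The paper's version is terser---it simply states the AII anchoring and invokes the identity diagonal map---whereas you make the two-diagonal combinatorics and the $KO^{-1}$ versus $KO^{-2}$ distinction explicit, but the underlying argument is the same.
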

 \begin{proof}
    The topological $\mathbb{Z}_2$ invariant in type AII is used to count the parity of Majonara zero modes, which is a quasi-particle described by a complex skew-adjoint Fredholm operator.
    As a result, the analytical index maps the effective Hamiltonian to $KO^{-2}(pt)$, which is always even. Combining with the identity diagonal map, one completes the proof. 
 \end{proof}

First of all, the topological invariants are completely determined by the effective Hamiltonians of topological insulators and superconductors. 
The discrete symmetries such the TRS, PHS and CS place symmetry constraints on the effective Hamiltonians, which fall into different subclasses of Fredholm operators.
On the other hand, those discrete symmetries introduce real structures and involutions in the topological band theory, which falls into the category of KR-theory. 
The classifying spaces of KR-theory can be described by subclasses of Fredholm operators as a generaliztion of the Atiyah--J{\"a}nich theorem \cite{AS69, GS15, L88}. 
Let $ \mathscr{F}_i$ be a subspace of Fredholm operators as a classifying space of $KR^{-i}$, that is,  for a compact space $X$,
$$
KR^{-i}(X) = [X, \mathscr{F}_i]
$$
The analytical index of the  effective Hamiltonians $H$ gives the topological invariants, i.e., if $H \in \mathscr{F}_i$, 
$$
ind_a(H) \in \pi_0(\mathscr{F}_i) = KO^{-i}(pt), \quad i = 0, 1, 2, 4 
$$
where the KO-theory of a point can be identified with the homotopy theory of Fredholm operators. Now we collect the types of the effective Hamiltonians into a table.

\begin{center}
 \begin{tabular}{||c|  c | c | c | c |c |c ||}
 \hline
    Type  & d=1 & d=2 & d=3  \\ [0.5ex]
 \hline\hline
  AI     &  &    &  \\
  \hline
 BDI      & $\mathscr{F}_0$ &   &  \\
 \hline
 D    &  $\mathscr{F}_1$ &   $\mathscr{F}_0$   &  \\
  \hline
 DIII     & $\mathscr{F}_2$ & $\mathscr{F}_1$ &    $\mathscr{F}_0$  \\  
 \hline
  AII    &  & $\mathscr{F}_2$ &  $\mathscr{F}_1$  \\
  \hline
CII   & $\mathscr{F}_4$ &  & $\mathscr{F}_2$ \\
 \hline
 C     &  & $\mathscr{F}_4$ &   \\
  \hline
 CI    &   &   & $\mathscr{F}_4$ \\ [1ex]
 \hline
\end{tabular}
\end{center}

In order to compute the topology invariants from the effective Hamiltonians, one can calculate the first two homotopy groups $\pi_0$ and $\pi_1$.  
The homotopy group $\pi_0$ counts the connected components and $\pi_1$ is related to spectral flow. 
Notice that for the subspace $\mathscr{F}_1$, we compute its $\pi_1$ group to get an invariant in $\pi_1(\mathscr{F}_1) = \pi_0(\mathscr{F}_2) = KO^{-2}(pt)$.

\begin{center}
 \begin{tabular}{||c|  c | c | c | c |c |c ||}
 \hline
    Type  & d=1 & d=2 & d=3  \\ [0.5ex]
 \hline\hline
  AI     &  &    &  \\
  \hline
 BDI      & $\pi_0(\mathscr{F}_0)$ &   &  \\
 \hline
 D    &  $\pi_1(\mathscr{F}_1)$ &   $\pi_0(\mathscr{F}_0)$   &  \\
  \hline
 DIII     & $\pi_0(\mathscr{F}_2)$ & $\pi_1(\mathscr{F}_1)$ &    $\pi_0(\mathscr{F}_0)$  \\  
 \hline
  AII    &  & $\pi_0(\mathscr{F}_2)$ &  $\pi_1(\mathscr{F}_1)$  \\
  \hline
CII   & $\pi_0(\mathscr{F}_4)$ &  & $\pi_0(\mathscr{F}_2)$ \\
 \hline
 C     &  & $\pi_0(\mathscr{F}_4)$ &   \\
  \hline
 CI    &   &   & $\pi_0(\mathscr{F}_4)$ \\ [1ex]
 \hline
\end{tabular}
\end{center}

By the Atiyah--Singer index theorem, the analytical index can be computed by the topological index. Recall the topological index map
$$
ind_t: KR^{-i}(TX) \rightarrow KO^{-i}(pt)
$$
(the symbol class of) an effective Hamiltonian determines a KR-theoretic class of the  cotangent bundle.
The following table gives the sources of the topological index map.
\begin{center}
 \begin{tabular}{||c|  c | c | c | c |c |c ||}
 \hline
    Type  & d=1 & d=2 & d=3  \\ [0.5ex]
 \hline\hline
  AI     &  &    &  \\
  \hline
 BDI      & $KR(TX)   $&   &  \\
 \hline
 D    & $KR^{-2}(TX)$ &  $KR(TX)  $  &  \\
  \hline
 DIII     & $KR^{-2}(TX)  $ & $KR^{-2}(TX)$ &   $KR(TX)   $  \\  
 \hline
  AII    &  & $KR^{-2}(TX)  $ &  $KR^{-2}(TX)$   \\
  \hline
CII   & $KQ(TX)  $ &  & $KR^{-2}(TX)  $ \\
 \hline
 C     &  & $KQ(TX)  $ &   \\
  \hline
 CI    &   &   & $KQ(TX)  $ \\ [1ex]
 \hline
\end{tabular}
\end{center}

\begin{proof}[Proof of Theorem \ref{idmap}]
  From the table of topological invariants in KO-theory of a point, it is clear that the diagonal map is the identity map.
  Let us look at the above table of sources, and it is enough to check the first nonzero diagonal with entries $KR(TX) = KR^{0,0}(TX)$.
  In the direction of rows, if the index increases by one, the symmetry class number $k$ is increased by one; similarly
  if the column index increases by one, one more spatial dimension is added. For example, from the first $KR(TX)$ of type BDI  to the second $KR(TX)$ of type D is given by a shift from
  $KR^{0,0}$ to $KR^{1,1}$, which is an isomorphism by the $(1,1)$-periodicity of KR-theory. This argument can be applied to any KR-group $KR^{k, \ell}$ in this table,
  where $k$ is the symmetry class number and $\ell$ is the column number (i.e., spatial dimension), both starting from $0$.
  
  The symmetry class number $k = n-1$ can be computed by the chiral symmetry (CS) column,
  which has a deep connection with the real Clifford algebra $C\ell_{0,n}$. In Kitaev's original paper \cite{K09}, K-theory comes from the nontrivial Clifford module extension problem
  from $C\ell_{0,n-1}$-module to $C\ell_{0,n}$-module.
  Each number $0$ or $1$ in a pair $(0,1)$ in the column of CS contributes to the total sum by $2^0 = 1$ or $2^1= 2$. For example, for type AII, read from the top to the row with AII, type AII has the code 01010 within three pairs,
  so 110 and $2^1 + 2^1 +2^0=2+2 +1 =5$ give the symmetry class number $5-1 = 4$.
    
\end{proof}

By the Thom isomorphism in KR-theory, we have the table of bulk KR-theory as the sources.
\begin{center}
 \begin{tabular}{||c|  c | c | c | c |c |c ||}
 \hline
    Type  & d=1 & d=2 & d=3  \\ [0.5ex]
 \hline\hline
  AI     &  &    &  \\
  \hline
 BDI      & $KR^{-1}(X)   $&   &  \\
 \hline
 D    & $KR^{-3}(X)$ &  $KR^{-2}(X)   $  &  \\
  \hline
 DIII     & $KR^{-3}(X) $ & $KR^{-4}(X)$ &   $KR^{-3}(X)   $  \\  
 \hline
  AII    &  & $KR^{-4}(X)  $ &  $KR^{-5}(X)$   \\
  \hline
CII   & $KR^{-5}(X)  $ &  & $KR^{-5}(X)  $ \\
 \hline
 C     &  & $KR^{-6}(X)  $ &   \\
  \hline
 CI    &   &   & $KR^{-7}(X)  $ \\ [1ex]
 \hline
\end{tabular}
\end{center}

The following table collects how one can compute the topological invariant by a topological index. 
By convention,  we use $ch_1(p)$ to denote the first Chern character of a vector bundle represented by a projection $p$, 
and $ch_i(w)$ ($i= 1, 3$) to denote the odd Chern character of the transition function $w$ in an odd K-theory.
In addition, we use $ch_i^{(2)}(w)$ to denote a $\mathbb{Z}_2$-valued odd topological index, and $ch_1^{(2)}(w)_{2 \rightarrow 1}$ to denote 
a $\mathbb{Z}_2$-valued odd topological index after the dimensional reduction from 2d to 1d.

\begin{prop}
   The table of topological indices can be used to compute the bulk invariants of topological insulators and superconductors. 
\end{prop}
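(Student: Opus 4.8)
The plan is to fill in the table entry by entry, using the source table of bulk KR-groups already assembled via the Thom isomorphism together with the topological index map $ind_t: KR^{-i}(TX) \rightarrow KO^{-i}(pt)$, and then to attach to each nonzero entry the appropriate local representative of the Chern character. The organizing principle is the parity of the spatial dimension: in even dimension the relevant class is carried by a projection $p$ and the index is an even Chern number, while in odd dimension the class is carried by a transition function $w$ and the index is a winding number, i.e. an odd Chern character. Since by Theorem \ref{idmap} the diagonal identifications are the identity, it suffices to supply one local formula per nonzero box and check that it lands in the group prescribed by the source table.

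First I would treat the $\mathbb{Z}$-valued entries. In dimension two (types D and C) the source is $KR(TX)$ or $KQ(TX)=KR^{-4}(TX)$, and by the even topological index formula the invariant is the first Chern number $ind_t(p)=\frac{1}{2\pi}\int_X tr(pdpdp)=\frac{1}{2\pi}\int_X ch_1(p)$, where $p$ is the projection onto the occupied bands representing the Hilbert bundle. In odd dimension the invariant is a winding number: in one dimension (types BDI, CII) it is $\frac{1}{2\pi}\int_X tr(w^{-1}dw)=\frac{1}{2\pi}\int_X ch_1(w)$, and in three dimensions (types DIII, CI) it is $\frac{1}{4\pi^2}\int_X tr(w^{-1}dw)^3=\frac{1}{4\pi^2}\int_X ch_3(w)$, with $w$ the transition function of the relevant Hilbert bundle. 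These are precisely the formulas recorded in the examples of the previous sections, so each $\mathbb{Z}$ box is filled by reading off its dimension and parity.

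Next I would treat the $\mathbb{Z}_2$-valued entries, which by the Corollary all live in $KO^{-2}(pt)$. Here the key point, and the step I expect to be the main obstacle, is that a $\mathbb{Z}_2$ invariant is generally not the mod $2$ reduction of an integral Chern number; as noted in Example \ref{2dTopInd}, $c_1 \bmod 2$ does not furnish a local formula in two dimensions. The correct mechanism is the compatibility $w\circ\tau=\rho\circ w$ with the real structure. For the genuinely odd cases (D and DIII in one dimension, AII and CII in three dimensions) the winding integral is already forced to equal its own negative by the involution, exactly as in the Lemma showing $ind_t(w)=-ind_t(w)$, hence is $\mathbb{Z}_2$-valued; I would record these as $ch_i^{(2)}(w)$ for the appropriate $i\in\{1,3\}$. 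For the two-dimensional $\mathbb{Z}_2$ cases (AII and DIII) one must instead perform a dimensional reduction: decompose the torus under its free $\mathbb{Z}_2$-action into $\mathbb{T}\times I$, so that the bundle is determined by $w$ on the boundary $S^1\sqcup S^1$, and take the one-dimensional winding index there, denoted $ch_1^{(2)}(w)_{2 \rightarrow 1}$, which is manifestly $\mathbb{Z}_2$-valued.

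Finally I would verify internal consistency: each local formula lands in the group dictated by the source table, and the $(1,1)$-periodicity identifications of Theorem \ref{idmap} guarantee that entries lying on a common diagonal carry compatible representatives, so no ambiguity arises from the diagonal map. Assembling these representatives, namely $ch_1(p)$ for the even $\mathbb{Z}$ entries, $ch_1(w)$ or $ch_3(w)$ for the odd $\mathbb{Z}$ entries, and the $\mathbb{Z}_2$-versions $ch_i^{(2)}(w)$ or $ch_1^{(2)}(w)_{2 \rightarrow 1}$ for the $KO^{-2}(pt)$ entries, produces the asserted table of topological indices and thereby shows that each bulk invariant is computed by the displayed topological index.
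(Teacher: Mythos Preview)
Your proposal is correct and matches the paper's approach. The paper does not supply a separate proof for this proposition; the table is presented as a summary of the case-by-case examples worked out in Sections~\ref{sec:TRS}--\ref{sec:CS}, and your entry-by-entry argument---even Chern number $ch_1(p)$ for the $\mathbb{Z}$ entries in $d=2$, odd winding numbers $ch_1(w)$, $ch_3(w)$ for the $\mathbb{Z}$ entries in $d=1,3$, the involution-forced $\mathbb{Z}_2$ reductions $ch_i^{(2)}(w)$ in the odd-dimensional $\mathbb{Z}_2$ boxes, and the dimensional reduction $ch_1^{(2)}(w)_{2\to 1}$ for the two-dimensional $\mathbb{Z}_2$ boxes---is exactly the content of those examples, organized in the same way.
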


\begin{center}
 \begin{tabular}{||c|  c | c | c | c |c |c ||}
 \hline
    Type  & d=1 & d=2 & d=3  \\ [0.5ex]
 \hline\hline
  AI     &  &    &  \\
  \hline
 BDI      & $ch_1(w)$ &   &  \\
 \hline
 D    & $ch_1^{(2)}(w)$ &  $ch_1(p)  $  &  \\
  \hline
 DIII     & $ch_1^{(2)}(w)$ & $ch_1^{(2)}(w)_{2 \rightarrow 1}$  &   $ch_3(w)  $  \\  
 \hline
  AII    &  & $ch_1^{(2)}(w)_{2 \rightarrow 1}$ &   $ch_3^{(2)}(w)  $  \\
  \hline
CII   & $ch_1(w)$ &  &  $ch_3^{(2)}(w)  $ \\
 \hline
 C     &  & $ch_1(p)  $ &   \\
  \hline
 CI    &   &   &  $ch_3(w)  $ \\ [1ex]
 \hline
\end{tabular}
\end{center}


\nocite{*}
\bibliographystyle{plain}
\bibliography{Bott}

\begin{thebibliography}{10}

\bibitem{AZ97}
A.~Altland and M.~Zirnbauer.
\newblock Nonstandard symmetry classes in mesoscopic normal-superconducting
  hybrid structures.
\newblock {\em Phys. Rev. B}, 55:1142, 1997.

\bibitem{A66}
M.~Atiyah.
\newblock K-theory and reality.
\newblock {\em Quart. J. Math.}, 17(2):367--386, 1966.

\bibitem{AS69}
M.F. Atiyah and I.M. Singer.
\newblock Index theory for skew-adjoint fredholm operators.
\newblock {\em Publ. IHES}, 37:5--26, 1969.

\bibitem{BA13}
J.C. Budich and E.~Ardonne.
\newblock Topological invariant for generic {1D} time reversal symmetric
  superconductors in class {DIII}.
\newblock {\em Phys. Rev. B}, 88:134523, 2013.

\bibitem{DFN15}
S.~{Das Sarma}, M.~Freedman, , and C.~Nayak.
\newblock Majorana zero modes and topological quantum computation.
\newblock {\em Quantum Information}, 1:15001, 2015.

\bibitem{D62}
F.~Dyson.
\newblock The threefold way: algebraic structure of symmetry groups and
  ensembles in quantum mechanics.
\newblock {\em Jour. Math. Phys.}, 3:1199–1215, 1962.

\bibitem{FM13}
D.~Freed and G.M. Moore.
\newblock Twisted equivariant matter.
\newblock {\em Annales Henri Poincar{\'e}}, 14(8):1927--2023, 2013.

\bibitem{FF10}
T.~Fukui and T.~Fujiwara.
\newblock $\mathbb{Z}_2$ index theorem for {M}ajorana zero modes in a class {D}
  topological superconductor.
\newblock {\em Phys. Rev. B}, 82:184536, 2010.

\bibitem{GS15}
J.~Grossmann and H.~Schulz-Baldes.
\newblock Index pairings in presence of symmetries with applications to
  topological insulators.
\newblock {\em Commun. Math. Phys.}, 343(2):477--513, 2016.

\bibitem{HK10}
M.Z. Hasan and C.L. Kane.
\newblock Colloquium: Topological insulators.
\newblock {\em Reviews of Modern Physics}, 82(4):3045, 2010.

\bibitem{HHZ05}
P.~Heinzner, A.~Huckleberry, and M.R. Zirnbauer.
\newblock Symmetry classes of disordered fermions.
\newblock {\em Commun. Math. Phys.}, 257:725, 2005.

\bibitem{KM05}
C.~Kane and E.~Mele.
\newblock $\mathbb{Z}_2$ topological order and the quantum spin {H}all effect.
\newblock {\em Phys. Rev. Lett.}, 95:146802, 2005.

\bibitem{KLW15}
R.~Kaufmann, D.~Li, and B.~Wehefritz-Kaufmann.
\newblock Topological insulators and {K}-theory.
\newblock 2015.
\newblock arXiv: 1510.08001.

\bibitem{KLW16}
R.~Kaufmann, D.~Li, and B.~Wehefritz-Kaufmann.
\newblock Notes on topological insulators.
\newblock {\em Rev. Math. Phys.}, 28(10):1630003, 2016.

\bibitem{KZ16}
R.~Kennedy and M.~Zirnbauer.
\newblock Bott periodicity for $\mathbb{Z}_2$ symmetric ground states of gapped
  free-fermion systems.
\newblock {\em Commun. Math. Phys.}, 342:909, 2016.

\bibitem{K01}
A.~Kitaev.
\newblock Unpaired majorana fermions in quantum wires.
\newblock {\em Physics-Uspekhi}, 44:131, 2001.

\bibitem{K09}
A.~Kitaev.
\newblock Periodic table for topological insulators and superconductors.
\newblock {\em AIP Conf. Proc.}, 1134:22--30, 2009.

\bibitem{L88}
J.~Lott.
\newblock Real anomalies.
\newblock {\em J. Math. Phys.}, 29:1455--1464, 1988.

\bibitem{QHZ08}
X.L. Qi, T.~Hughes, and S.C. Zhang.
\newblock Topological field theory of time-reversal invariant insulators.
\newblock {\em Phys. Rev. B}, 78:195424, 2008.

\bibitem{QZ11}
X.L. Qi and S.C. Zhang.
\newblock Topological insulators and superconductors.
\newblock {\em Rev. Mod. Phys.}, 83:1057, 2011.

\bibitem{RG00}
N.~Read and D.~Green.
\newblock Paired states of fermions in two dimensions with breaking of parity
  and time-reversal symmetries and the fractional quantum hall effect.
\newblock {\em Phys. Rev. B}, 61:10267--10297, 2000.

\bibitem{R10}
R.~Roy.
\newblock Topological {M}ajorana and {D}irac zero modes in superconducting
  vortex cores.
\newblock {\em Phys. Rev. Lett.}, 105:186401, 2010.

\bibitem{RSFL10}
S.~Ryu, A.~Schnyder, A.~Furusaki, and A.~Ludwig.
\newblock Topological insulators and superconductors: tenfold way and
  dimensional hierarchy.
\newblock {\em New Journal of Physics}, 12:065010, 2010.

\bibitem{SA17}
M.~Sato and Y.~Ando.
\newblock Topological superconductors: a review.
\newblock {\em Rep Prog Phys.}, 80(7):076501, 2017.

\bibitem{SRFL09}
A.~Schnyder, S.~Ryu, A.~Furusaki, and A.~Ludwig.
\newblock Classification of topological insulators and superconductors.
\newblock {\em AIP Conf. Proc.}, 1134:10--21, 2009.

\bibitem{SCR11}
M.~Stone, C.K. Chiu, and A.~Roy.
\newblock Symmetries, dimensions and topological insulators: The mechanism
  behind the face of the {B}ott clock.
\newblock {\em J Phys A Math Theor.}, 44(4):045001, 2011.

\bibitem{TDL07}
S.~Tewari, S.~{Das Sarma}, and D.H. Lee.
\newblock Index theorem for the zero modes of majorana fermion vortices in
  chiral p-wave superconductors.
\newblock {\em Phys. Rev. Lett.}, 99:037001, 2007.

\bibitem{V03}
G.~Volovik.
\newblock {\em The Universe in a Helium Droplet}.
\newblock Oxford University Press, 2003.

\bibitem{Z96}
M.~Zirnbauer.
\newblock Riemannian symmetric superspaces and their origin in random matrix
  theory.
\newblock {\em J. Math. Phys.}, 37:4986, 1996.

\end{thebibliography}

\end{document}